\newcommand{\cone}{\mathcal{C}}
\newcommand{\T}{\mathcal{T}}
\newcommand{\R}{\mathcal{R}}
\newcommand{\spp}{\xi}
\newcommand{\pp}{p}
\newtheorem{theorem}{{\bf Theorem}}
\newtheorem{lemma}[theorem]{Lemma}
\newtheorem{property}[theorem]{Property}
\newcommand{\arr}[1]{\ensuremath{\protect\overrightarrow{#1}}}
\newenvironment{proof}[1][Proof]{\begin{trivlist}
\item[\hskip \labelsep {\bfseries #1}]}{\end{trivlist}}
\newcommand{\qed}{\nobreak \ifvmode \relax \else
      \ifdim\lastskip<1.5em \hskip-\lastskip
      \hskip1.5em plus0em minus0.5em \fi \nobreak
      \vrule height0.5em width0.5em depth0.25em\fi}
\begin{document}

\title{Spanning Properties of Theta-Theta-6}

\author{Mirela Damian\thanks{Department of Computing Sciences, Villanova University,
USA. \texttt{mirela.damian@villanova.edu}} 
\and 
John Iacono 
\thanks{Universit\'e Libre de Bruxelles and New York University. 
Supported by NSF grants CCF-1319648, CCF-1533564, CCF-0430849 and MRI-1229185, a Fulbright Fellowship and by the Fonds de la Recherche Scientifique-FNRS under Grant no MISU F 6001 1.}
\and 
Andrew Winslow\thanks{University of Texas Rio Grande Valley, USA. \url{andrew.winslow@utrgv.edu}}}
\date{}
\maketitle

\begin{abstract}
We show that, unlike the Yao-Yao graph $YY_6$, the Theta-Theta graph $\Theta\Theta_6$ defined by six cones is a spanner for sets of points in convex position. We also show that, for sets of points in non-convex position, the spanning ratio of $\Theta\Theta_6$ is unbounded. 
\end{abstract}

\section{Introduction}
Let $S$ be a set of $n$ points in the plane and let $G = (S, E)$ be a weighted geometric graph with vertex set $S$ and a set $E$ of (directed or undirected) edges between pairs of points, where the weight of an edge $uv \in E$ is equal to the Euclidean distance $|uv|$ between $u$ and $v$. 
The \emph{length} of a path in $G$ is the sum of the weights of its constituent edges. The distance $d_G(u, v)$ in $G$ between two points $u, v \in S$ is the length of a shortest path in $G$ between $u$ and $v$.
The graph $G$ is called a \emph{t-spanner} if any two points $u, v \in S$ at distance $|uv|$ in the plane are at distance $d_G(u, v) \le t\cdot |uv|$ in $G$. The smallest integer $t$ for which this property holds is called the \emph{spanning ratio} of 
$G$. 

The Yao graph $Y_k(S)$ and the Theta graph $\Theta_k(S)$ are defined for a fixed integer $k > 0$ as follows. 
Partition the plane into  $k$ equiangular cones 
by extending $k$ equally-separated rays starting at the origin, with the first ray in the direction of the positive $x$-axis. Then translate the cones to each point $u \in S$, and connect $u$ to a ``nearest'' neighbor in each cone. 
The difference between Yao and Theta graphs is in the way the ``nearest'' neighbor is defined. 
For a fixed point $u \in S$ and a cone $\cone(u)$ with apex $u$, a Yao edge $\overrightarrow{uv} \in \cone(u)$ minimizes the Euclidean distance $|uv|$ between $u$ and $v$, whereas a Theta edge $\overrightarrow{uv} \in \cone(u)$ minimizes the \emph{projective distance} $\|uv\|$ from $u$ to $v$, which is the Euclidean distance between $u$ and the orthogonal projection of $v$ on the bisector of $\cone(u)$. Ties are arbitrarily broken. 

Each of the graphs $\Theta_k$ and $Y_k$ has out-degree $k$, but in-degree $n-1$ in the worst case (consider, for example, the case of $n-1$ points uniformly distributed on the circumference of a circle centered at the $n^{th}$ point: for any $k \ge 6$, the center point has in-degree $n-1$). This is a significant drawback in certain wireless networking applications where a wireless node can communicate with only a limited number of neighbors. 
To reduce the in-degrees, a second filtering step can be applied to the set of incoming edges in each cone. This filtering step eliminates, for each each point $u \in S$ and each cone with apex $u$, all but a ``shortest'' incoming edge. The result of this filtering step applied on $\Theta_k$ ($Y_k$) is the Theta-Theta (Yao-Yao) graph $\Theta\Theta_k$ ($YY_k$). 
Again, the definition of ``shortest'' differs for Yao and Theta graphs: a shortest Yao edge $\overrightarrow{vu} \in \cone(u)$ minimizes $|vu|$, and a shortest Theta edge $\overrightarrow{vu} \in \cone(u)$ minimizes $\|vu\|$.  
Again, ties are arbitrarily broken. 

Yao and Theta graphs (and their Yao-Yao and Theta-Theta sparse variants) have many important applications
in wireless networking~\cite{AlzoubiMW03}, motion planning~\cite{Clark87} and walkthrough animations~\cite{FLZ98}.
We refer the readers to the books by Li~\cite{Li08} and Narasimhan and Smid~\cite{ns-gsn-07} for more details on their uses, and 
to the comprehensive survey by Eppstein~\cite{Epp00} for related topics on geometric spanners. 
Many such applications take advantage of the spanning and sparsity properties of these graphs, which have been extensively studied. 
Molla~\cite{MollaThesis09} showed that $Y_2$ and $Y_3$ may not be spanners, and her examples can be used to show that $\Theta_2$ and $\Theta_3$ are not spanners either. 
On the other hand, it has been shown that,  for any $k \ge 4$, $Y_k$ and $\Theta_k$ are spanners: 
$Y_4$ is a $54.6$-spanner~\cite{DamianNelavalli17} and $\Theta_4$ is a $17$-spanner~\cite{BC+18};
$Y_5$ is a $3.74$-spanner~\cite{Barba+15} and $\Theta_5$ is a $9.96$-spanner~\cite{BoseMRS14};
$Y_6$ is a $5.8$-spanner~\cite{Barba+15} and $\Theta_6$ is a $2$-spanner~\cite{BGH+10}; 
for $k \ge 7$, the spanning ratio of $Y_k$ is $\frac{1+\sqrt{2-2\cos(2\pi/k)}}{2\cos(2\pi/k)-1}$~\cite{BDDx10}
and the spanning ratio of  $\Theta_k$ is $\frac{1}{1-2\sin(\pi/k)}$~\cite{RS91}; 
improved bounds on the spanning ratio of $Y_k$ for odd $k \ge 5$, and for $\Theta_k$ for even $k \ge 6$, also exist~\cite{BoseRV13}. 

In contrast with Yao and Theta graphs, our knowledge of Yao-Yao and Theta-Theta graphs is more limited. 
Li et al.~\cite{li02sparse} proved that $YY_k$ is connected for $k > 6$ and provided substantial experimental
evidence suggesting that $YY_k$ is a spanner for large $k$ values. This conjecture has been partly confirmed 
by Bauer and Damian~\cite{DB13} who showed that, for $k \ge 6$, 
$YY_{6k}$ is a spanner with spanning ratio $11.76$. This spanning ratio has 
been improved to $7.82$ in~\cite{Damian18} for a more general class of graphs called \emph{canonical $k$-cone graphs},
which include both $YY_{6k}$ and $\Theta\Theta_{6k}$, for $k \ge 6$. The same paper establishes a spanning ratio of 
$16.76$ for $YY_{30}$ and $\Theta\Theta_{30}$. 
Recent breakthroughs show that $YY_{2k}$, for any $k \ge 42$, is a spanner with 
spanning ratio $6.03 + O(k^{-1})$~\cite{LiZhan16}, and $YY_k$ for odd $k \ge 3$ is not a spanner~\cite{Jin+18}. 
For small values $k \le 5$, Damian et al.~\cite{DMP09} show that $YY_4$ is not a spanner, and 
Barba et al.~\cite{Barba+15} show that $YY_5$ is not a spanner, and their constructions can also be used to show 
that $\Theta\Theta_4$ and $\Theta\Theta_5$ are not spanners.
Molla~\cite{MollaThesis09} showed that $YY_6$ is also not a spanner, even for sets of 
points in convex position. This paper fills in one of the gaps in our knowledge of 
Theta-Theta graphs and shows that $\Theta\Theta_6$ is an $8$-spanner for sets of points in convex position, but 
has unbounded spanning ratio for sets of points in non-convex position.  

\section{Definitions}
%
Throughout the paper, $S$ is a fixed set of $n$ points in the plane and $k > 1$ is a fixed integer.  
The graphs $Y_k$ and $\Theta_k$ use a set of $k$ equally-separated rays starting at the origin. These rays define $k$ equiangular cones $\cone_1, \cone_2, \ldots, \cone_k$, each of angle $\theta = 2\pi/k$, with the lower ray of $\cone_1$ extending in the direction of the positive $x$-axis. Refer to~\autoref{fig:t6-sectors}. 
We assume that each cone is half-open and half-closed, meaning that it includes the clockwise bounding ray, but it excludes the counterclockwise bounding ray. 
%
\begin{figure}[htbp]
\centering
\includegraphics[width=0.9\linewidth]{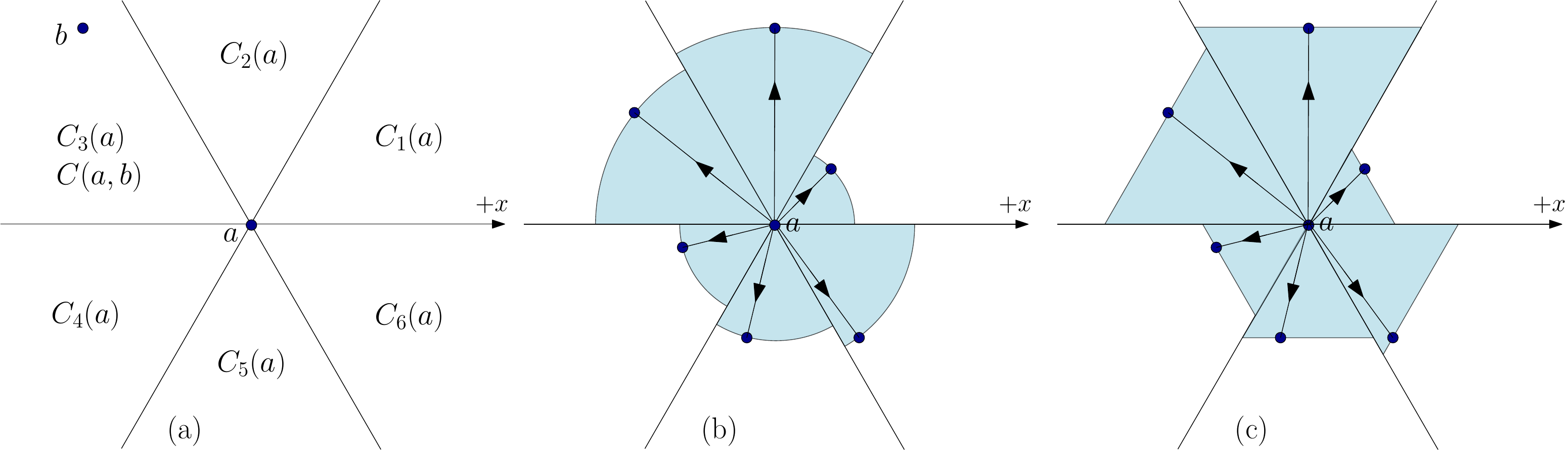} 
\caption{(a) Cones defining $Y_6$ and $\Theta_6$ (b) $Y_6$ edges minimize Euclidean distances (c) $\Theta_6$ edges minimize projective distances.}
\label{fig:t6-sectors}
\end{figure}
%
Let $\cone_i(a)$ denote a copy of $\cone_i$ translated to $a$, for each $a \in S$ and each $i = 1, \ldots, k$.
The directed graphs $\arr{Y_k}$ and $\arr{\Theta_k}$ are constructed as follows. 
In each cone $C_i(a)$, for each $i = 1, \ldots, k$ and each $a \in S$, extend a directed edge from $a$ to a ``nearest'' point $b$ that lies in $C_i(a)$. 
Yao and Theta graphs differ only in the way ``nearest'' is defined. 
A point $b$ is ``nearest'' to $a$ in $Y_k$ if it minimizes the Euclidean distance $|ab|$, whereas $b$ is ``nearest'' to $a$ in $\Theta_k$ if it minimizes the projective distance $\|ab\|$. See~\autoref{fig:graphs}a,b for simple graph examples illustrating these definitions.
\begin{figure}[htbp]
\centering
\includegraphics[width=0.5\linewidth]{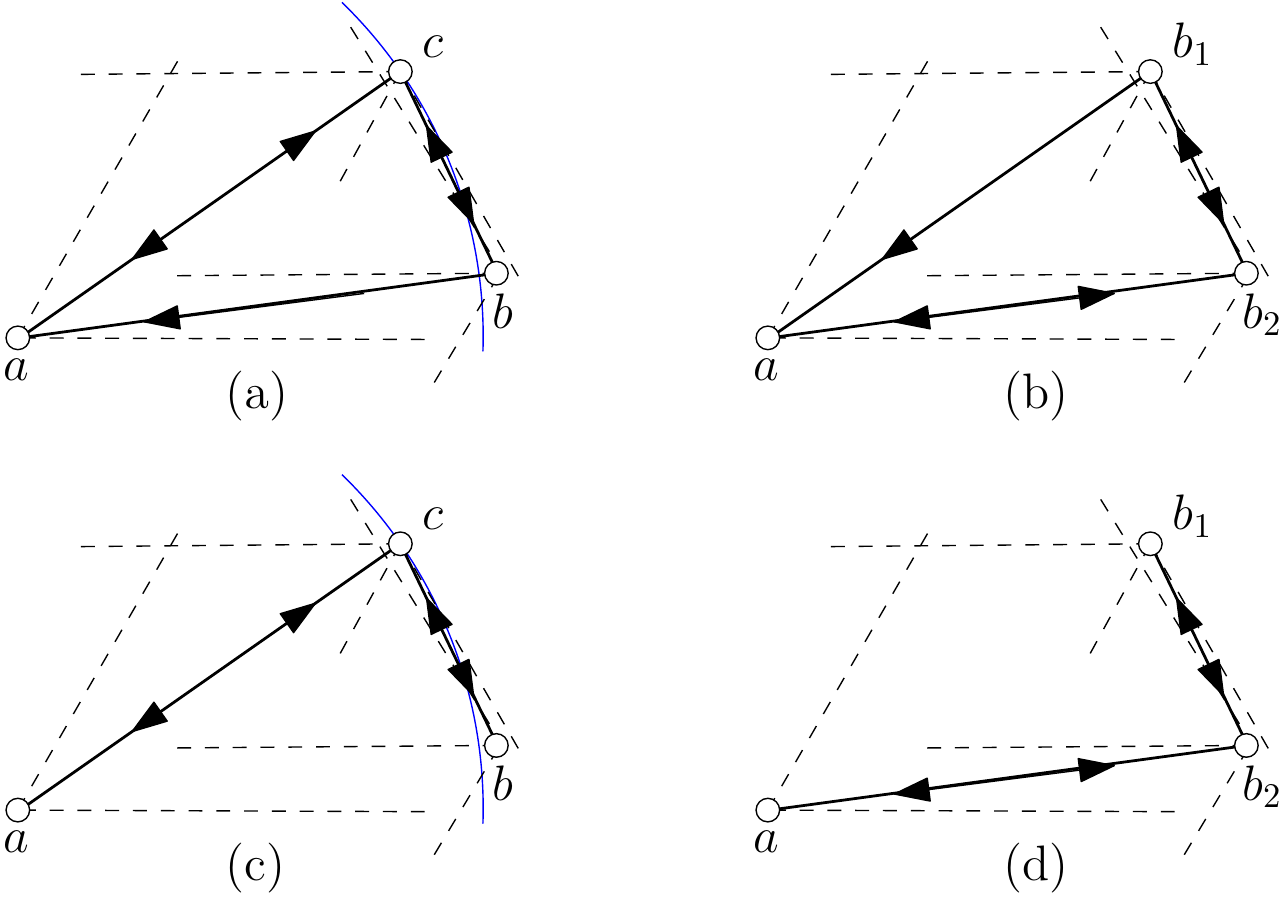} 
\caption{Graph examples (a) $Y_6$ (b) $\Theta_6$ (c) $YY_6$  (d) $\Theta\Theta_6$.}
\label{fig:graphs}
\end{figure}

The Yao-Yao graph $\arr{YY_k} \subseteq \arr{Y_k}$ and Theta-Theta graph $\arr{\Theta\Theta_k} \subseteq \arr{\Theta_k}$
are obtained by applying a filtering step to the set of incoming edges at each vertex in $\arr{Y_k}$ and $\arr{\Theta_k}$, respectively. 
Specifically, for each $a \in S$ and each $i = 1, \ldots, k$, these graphs retain a ``shortest'' incoming edge that lies in $\cone_i(a)$ and 
and discard the rest of incoming edges, if any. Recall that a ``shortest'' Yao edge $\overrightarrow{ba} \in \cone_i(a)$ minimizes $|ba|$, whereas a ``shortest'' Theta edge $\overrightarrow{ba} \in \cone_i(a)$ minimizes $\|ba\|$.~\autoref{fig:graphs}c(d) depicts the graph $YY_6$ ($\Theta\Theta_6$) after this filtering step has been applied to the graph $Y_6$ ($\Theta_6$) from~\autoref{fig:graphs}a(b).

\section{Background: $YY_6$ is not a Spanner}
\label{sec:yy6}
Molla~\cite{MollaThesis09} gave an example of a set of points in convex position for which $YY_6$ is not a spanner. We briefly review her construction here and show that the result does not hold for $\Theta\Theta_6$.
The construction 
begins with a strip of equilateral triangles between two horizontal lines with vertices $\{a_1, a_2, \ldots, a_n\}$ on the lower line (which we call the $a$-line) and $\{b_1, b_2, \ldots, b_n\}$ on the upper line (which we call the $b$-line). See the left of~\autoref{fig:yy6cex}a. Next the $a$-line is rotated clockwise about $a_1$ and the $b$-line is rotated counterclockwise about $b_1$ by a small angle $\alpha > 0$, to guarantee that 
$|a_{i-1}a_{i}| < |b_{i-1}a_{i}| $ and 
$|b_{i-1}b_{i}| < |a_ib_{i}| $, for $i = 2, \ldots, n$. 
The points are also slightly perturbed to ensure that $\cone_{2}(a_i)$ and $\cone_5(b_i)$ are all empty, for $i = 1, \ldots, n$. The result is depicted in the right of~\autoref{fig:yy6cex}a.

\begin{figure}[htbp]
\centering
\includegraphics[width=\linewidth]{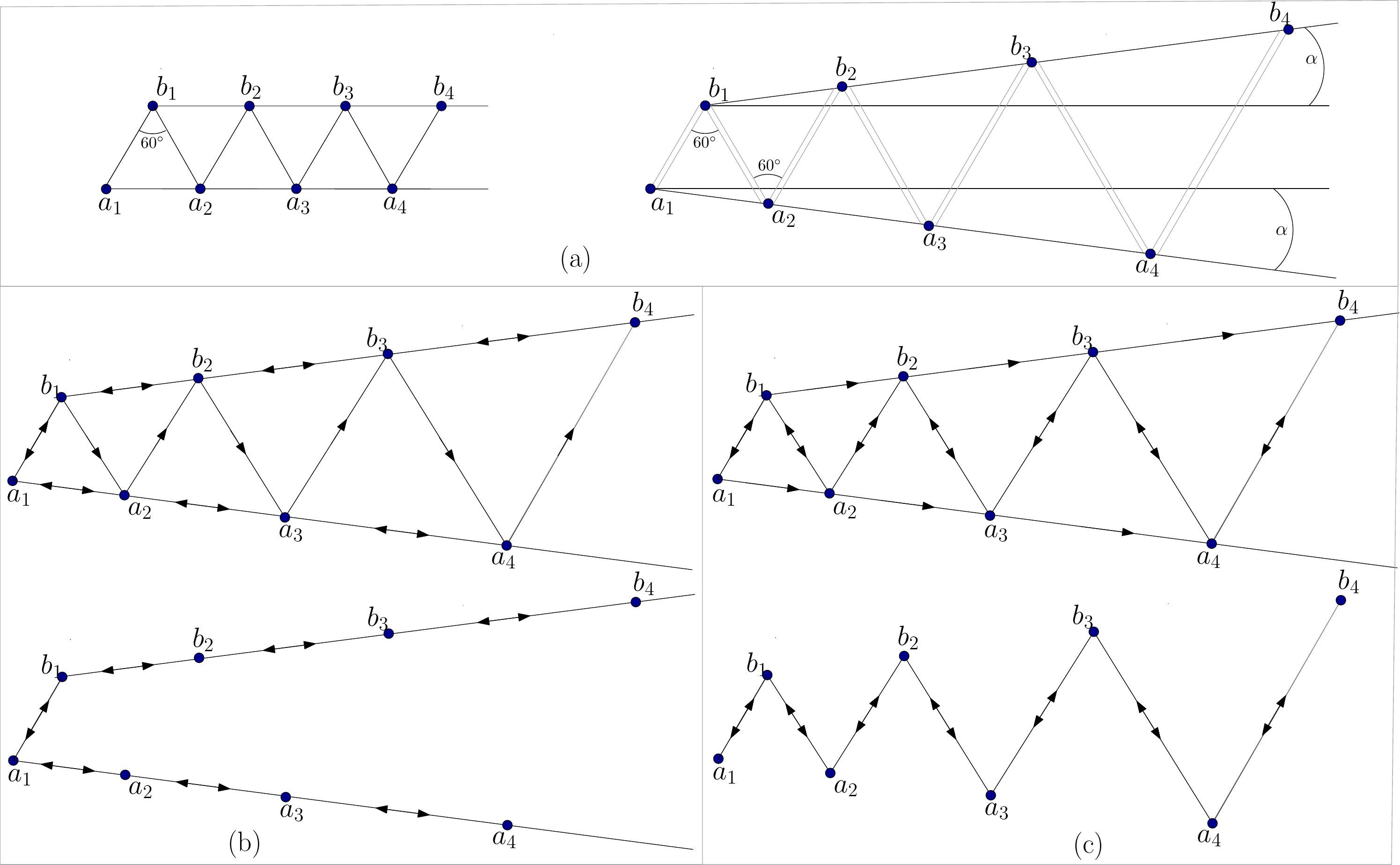} 
\caption{(a) Point set $\{a_1, \ldots a_4\} \cup \{b_1, \ldots, b_4\}$ (b) Graphs $Y_6$ (top) and $YY_6$ (bottom) 
(c) Graphs $\Theta_6$ (top) and $\Theta\Theta_6$ (bottom).
}
\label{fig:yy6cex}
\end{figure}

The graphs $Y_6$ and $YY_6$ induced by the set of points $S = \{a_1, \ldots a_4\} \cup \{b_1, \ldots, b_4\}$ are depicted in~\autoref{fig:yy6cex}b. Note that, with the exception of $a_1b_1$, $YY_6$ includes none of the $Y_6$ edges incident on both the $a$-line and the $b$-line. This is because, for $i > 1$, $\arr{b_{i-1}a_i}$ and $\arr{a_{i-1}a_i}$ both lie in $\cone_3(a_i)$ and $YY_6$ maintains only the shorter of the two, which is $\arr{a_{i-1}a_i}$.
Similarly, $\arr{a_ib_i}$ and $\arr{b_{i-1}b_i}$ both lie in $\cone_4(b_i)$ and $YY_6$ maintains only the shorter of the two, which is $\arr{b_{i-1}b_i}$. This shows that the shortest path in $YY_6$ between $a_n$ and $b_n$ is a Hamiltonian path of length at least $2n-1$, which grows arbitrarily large with $n$.  It follows that $YY_6$ is not a spanner. 

For the same point set $S$, the graphs $\Theta_6$ and $\Theta\Theta_6$ are depicted in~\autoref{fig:yy6cex}c. Note that, if projective distances are used, then 
$\|a_{i-1}a_{i}\| > \|b_{i-1}a_{i}\| $ and 
$\|b_{i-1}b_{i}\| > \|a_ib_{i}\| $, for $i = 2, \ldots, n$. 
These properties force $\Theta\Theta_6$ to maintain $\arr{b_{i-1}a_{i}} \in \cone_3(a_i)$ and $\arr{a_ib_{i}} \in \cone_4(b_i)$, for each $i = 2, \ldots, n$. The result is the zig-zag path depicted in~\autoref{fig:yy6cex}c which shows that, for this particular point set, $\Theta\Theta_6$ is a spanner. In the next section we show that $\Theta\Theta_6$ is a spanner for any set of points in convex position. 



\section{$\Theta\Theta_6$ is a Spanner for Points in Convex Position}
It has been established in~\cite{MollaThesis09} (and revisited in~\autoref{sec:yy6} of this paper) that $YY_6$ is not a spanner for sets of points in convex position. In this section we show that, unlike $YY_6$, the graph $\Theta\Theta_6$ is an $8$-spanner for sets of points in convex position (in the next section we will show that this result does not hold for sets of points in non-convex position). This is the first result that marks a difference in the spanning properties of $YY$-graphs and $\Theta\Theta$-graphs. 

Throughout this section, we assume that $S$ is a set of points in convex position. For simplicity, we also assume that the points in $S$ are in general position, meaning that no two points lie on a line parallel to one of the rays that define the cones. This implies that there is a unique nearest point in each cone of $\Theta_6$ and $\Theta\Theta_6$. 
We begin with a few definitions.

For any $a, b \in S$, let $\cone(a,b)$ denote the cone with apex $a$ that contains $b$. 
For any ordered pair of vertices $a$ and $b$, let $\T(a,b)$ be the \emph{canonical triangle} delimited by the rays bounding $\cone(a, b)$ and the perpendicular through $b$ on the bisector of $\cone(a, b)$. See~\autoref{fig:monpath}a. 
For a fixed point $a \in S$ and $i \in \{1, \ldots, k\}$, let $\pp_{\Theta_6}(a, i)$ denote the path in $\Theta_6$ that starts at $a$ and follows the $\Theta_6$-edges that lie in cones $C_i$. See, for example, the path 
$\pp_{\Theta_6}(a, 1)$ depicted in~\autoref{fig:monpath}. Note that this path is monotone with respect to the bisector of $C_i$. This along with the fact that the point set $S$ is finite implies that the path itself is finite and well defined. 
We say that two edges $ab$ and $cd$ \emph{cross} if they share a point other than an endpoint ($a$, $b$, $c$ or $d$).

\begin{figure}[htbp]
\centering
\includegraphics[width=0.7\linewidth]{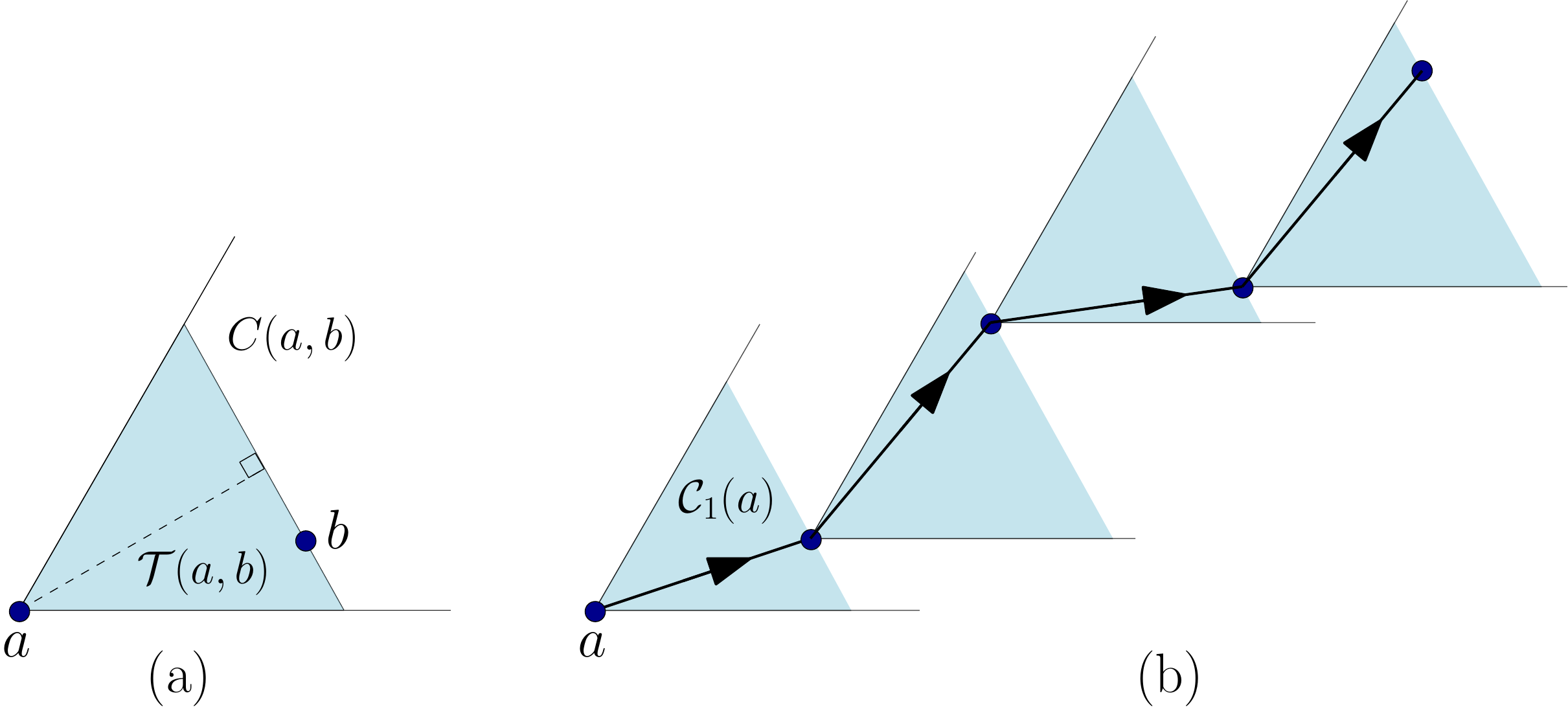} 
\caption{(a) Canonical triangle $\T(a,b)$ (b) Path $\pp_{\Theta_6}(a, 1)$.}
\label{fig:monpath}
\end{figure}

The \emph{half-$\Theta_6$-graph} introduced in~\cite{BGH+10} takes only ``half'' the edges of $\Theta_6$, those belonging to non-consecutive cones. Thus, the $\Theta_6$-graph is the union of two half-$\Theta_6$-graphs: 
one that includes all $\Theta_6$-edges that lie in cones $C_1$, $C_3$, $C_5$, and 
one that includes all $\Theta_6$-edges that lie in cones $C_2$, $C_4$, $C_6$. Bonichon et al.~\cite{BGH+10} show that half-$\Theta_6$ is a \emph{triangular-distance}\footnote{The \emph{triangular distance} from a point $a$ to a point $b$
is the side length of the smallest equilateral triangle centered at $a$ that touches $b$ and has one horizontal side.} 
Delaunay triangulation, computed as the dual of the Voronoi diagram based on
the triangular distance function. This, combined with Chew's proof that any triangular-distance Delaunay triangulation is a $2$-spanner~\cite{Chew89}, yields the following result.

\begin{theorem}{\emph{\cite{BGH+10}}}
The half-$\Theta_6$-graph is a plane $2$-spanner.
\label{thm:theta6}
\end{theorem}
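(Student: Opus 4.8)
The plan is to derive the statement by composing the two facts assembled just above it: the identification of the half-$\Theta_6$-graph with a triangular-distance Delaunay triangulation (Bonichon et al.), and Chew's bound that every such triangulation is a $2$-spanner. Concretely, I would argue in three stages — establish the structural equivalence, deduce planarity, and transfer the spanning ratio — treating the first stage as the only one that requires genuine geometric work.

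For the equivalence, the key observation is that when each cone has angle $\pi/3$, every canonical triangle $\T(a,b)$ is equilateral and is a homothet of the fixed triangle that defines the triangular distance. An edge $\arr{ab}$ lies in the half-$\Theta_6$-graph exactly when $b$ minimizes the projective distance in $\cone(a,b)$, which is equivalent to saying that $\T(a,b)$ is empty of points of $S$ in its interior (apart from $b$ on its boundary). An empty homothetic triangle touching both $a$ and $b$ is precisely the witness certifying $ab$ as an edge of the triangular-distance Delaunay triangulation, and conversely each Delaunay edge arises from such an empty triangle. This two-way correspondence is the content of the Bonichon et al. result, so I would invoke it rather than re-prove it from scratch.

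Given the equivalence, planarity is immediate, since a Delaunay triangulation under a convex distance function is by construction a plane graph: its bounded faces are dual to the cells of the corresponding Voronoi diagram, and no two of its edges cross. The spanning ratio then follows by applying Chew's theorem to the triangular-distance Delaunay triangulation identified in the first stage, yielding a spanning ratio of at most $2$. The one subtle point — and hence the main obstacle — is the first stage: matching empty canonical triangles to Delaunay witnesses requires care with the half-open/half-closed cone convention, with tie-breaking, and with confirming that the edge set forms a full triangulation rather than a partial one. Since this is exactly what Bonichon et al. establish, the remainder is a routine assembly of the two cited results.
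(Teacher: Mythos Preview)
Your proposal is correct and mirrors exactly what the paper does: the theorem is stated as a cited result, and the paragraph preceding it explains that it follows by combining Bonichon et al.'s identification of the half-$\Theta_6$-graph with the triangular-distance Delaunay triangulation and Chew's $2$-spanner bound for such triangulations. There is no additional argument in the paper beyond this assembly, so your three-stage outline is precisely on target.
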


\noindent
Next we introduce two preliminary lemma that will be useful in proving the main result of this section. 

\begin{figure}[htbp]
\centering
\includegraphics[width=0.99\linewidth]{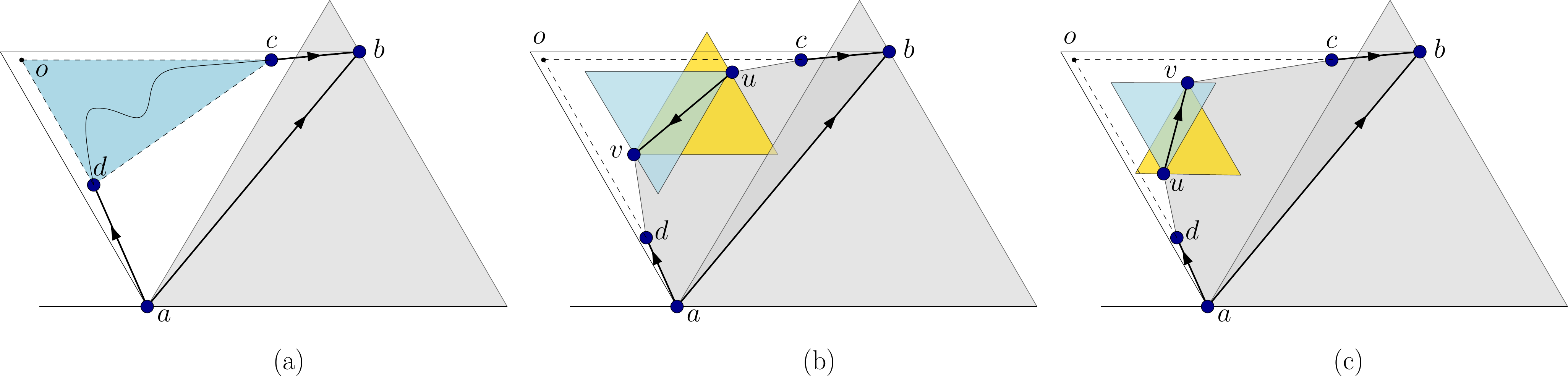} 
\caption{\autoref{lem:tripath} (a) There is a path $\pp_{\Theta\Theta_6}(d, c)$ that lies inside $\triangle{cod}$
(b) If $\arr{uv} \in \Theta_6$, $\arr{uv} \in \cone_4(u)$, then $\arr{uv} \in \Theta\Theta_6$ 
(c) If $\arr{uv} \in \Theta_6$, $\arr{uv} \in \cone_2(u)$, then $\arr{uv} \in \Theta\Theta_6$.}
\label{fig:tripath}
\end{figure}

\begin{lemma}
Let $S$ be a set of points in convex position and let $a, b, c, d \in S$ be distinct points such that $b \in \cone_1(a)$ and $\arr{ab} \in \Theta_6 \setminus \Theta\Theta_6$; $c \in \cone_4(b)$ and $\arr{cb} \in \Theta\Theta_6$; $d \in \cone_2(a)$ and $\arr{ad} \in \Theta_6$. Let $o$ be the intersection point between the upper ray of $\cone_4(c)$ and the left ray of $\cone_2(d)$. Then there is a path in $\Theta\Theta_6$ between $c$ to $d$ that lies in $\triangle{cod}$ and is no longer than $|oc| +|od|$. 
\label{lem:tripath}
\end{lemma}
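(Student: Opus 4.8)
The plan is to build the path from $d$ to $c$ out of $\Theta_6$-edges lying in the even cones $\cone_2$ and $\cone_4$, and then to invoke the convex-position facts illustrated in \autoref{fig:tripath}(b),(c)---namely that a $\Theta_6$-edge lying in $\cone_2$ or $\cone_4$ of its tail automatically survives the filtering step and hence belongs to $\Theta\Theta_6$---to conclude that the resulting path lies in $\Theta\Theta_6$. I would first fix coordinates so that the bisector of $\cone_2$ points straight up. Then the upper ray of $\cone_4(c)$ is the horizontal ray through $c$ and the left ray of $\cone_2(d)$ is the ray of slope $-\tan 60^\circ$ pointing up-left from $d$, and a short computation shows that $\triangle cod$ has a $60^\circ$ angle at $o$, with $oc$ horizontal and $od$ along the $\cone_2(d)$ boundary. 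In particular $o$ is the upper-left corner of the canonical triangle $\T(d,c)$ and $c$ lies on its top side, so that $\triangle cod$ is exactly the left portion of $\T(d,c)$. Establishing $c\in\cone_2(d)$ from the hypotheses on $a,b,c,d$ (possibly after a short case analysis) is the first thing to pin down.

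For the length bound I would use oblique coordinates $(\alpha,\beta)$ with respect to the two sides at $o$, writing every point as $o+\alpha\,\hat{u}+\beta\,\hat{v}$, where $\hat u$ and $\hat v$ are the unit vectors along $\vec{oc}$ and $\vec{od}$. Then $c$ has coordinates $(|oc|,0)$ and $d$ has coordinates $(0,|od|)$, and since $\hat u,\hat v$ meet at $60^\circ$ the cross term is non-positive, so any single step with displacement $\Delta\alpha\,\hat u+\Delta\beta\,\hat v$ satisfying $\Delta\alpha\ge 0\ge\Delta\beta$ obeys $|\Delta\alpha\,\hat u+\Delta\beta\,\hat v|\le|\Delta\alpha|+|\Delta\beta|$. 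Hence any $d$-to-$c$ path whose $\alpha$-coordinate is non-decreasing and whose $\beta$-coordinate is non-increasing telescopes to length at most $|oc|+|od|$. The reason for preferring $\cone_2$-edges is that every such edge raises the $y$-coordinate, and therefore automatically decreases $\beta$ and (one checks) does not decrease $\alpha$; so a path made of $\cone_2$-edges is monotone in exactly the required sense, and a spliced-in $\cone_4$-edge respects the same monotonicity.

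The remaining, and main, task is to produce such a path that actually stays inside $\triangle cod$ and terminates at $c$. I would do this by induction on the number of points of $S$ inside $\triangle cod$, following the $\Theta_6$-edge $\arr{dd'}$ in $\cone_2(d)$: since $c\in\cone_2(d)$, the projective-nearest point $d'$ in $\cone_2(d)$ satisfies $\|dd'\|\le\|dc\|$, so $d'$ lies below the top side of $\T(d,c)$ and, using convex position to control the horizontal drift, inside $\triangle cod$; the edge $\arr{dd'}$ is in $\Theta\Theta_6$ by \autoref{fig:tripath}(c), and one then recurses from $d'$. When the recursion reaches a vertex whose $\cone_2$-neighbor is $c$ itself, the path closes up, and convexity of $S$ together with the monotonicity of the $\cone_2$-chain should guarantee termination at $c$ rather than escape through side $cd$ or side $od$; where the pure $\cone_2$-chain would leave $\triangle cod$, I expect to splice in a $\cone_4$-edge, again in $\Theta\Theta_6$ by \autoref{fig:tripath}(b).

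The hard part is precisely this containment-and-termination argument: I must show that convex position, together with the relations forced by $\arr{ab}\in\Theta_6\setminus\Theta\Theta_6$, $\arr{cb}\in\Theta\Theta_6$ and $\arr{ad}\in\Theta_6$, prevents the constructed path from crossing the open side $cd$ and forces it to arrive at $c$ using only $\cone_2$- and $\cone_4$-edges. Once containment and the $\{\cone_2,\cone_4\}$-membership of every edge are secured, the oblique-monotonicity computation above immediately yields a $\Theta\Theta_6$ path from $d$ to $c$ inside $\triangle cod$ of length at most $|oc|+|od|$, as claimed. If the direct inductive construction proves too delicate, a fallback is to route along the left boundary of the triangular-distance Delaunay triangulation, which is a $2$-spanner by \autoref{thm:theta6}, and then argue that this boundary path uses only even-cone edges and is oblique-monotone.
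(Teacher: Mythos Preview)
Your overall framework is right and matches the paper: build the $c$--$d$ path out of $\Theta_6$-edges lying in the even cones $\cone_2,\cone_4$; argue that each such edge survives the second filtering step (so lies in $\Theta\Theta_6$); and bound the length by containment in $\triangle cod$. Your oblique-coordinate monotonicity argument for the length bound is a perfectly good substitute for the paper's one-line ``convex path inside a triangle is shorter than the two enclosing sides.''

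Where your proposal has a real gap is exactly where you say the hard part is: termination and containment of the one-sided greedy walk from $d$. The paper avoids this entirely with a trick you do not use. Instead of pushing a single $\cone_2$-chain from $d$ toward $c$ and hoping it arrives, the paper launches \emph{two} monotone chains simultaneously: $p_d=p_{\Theta_6}(d,2)$ from $d$ and $p_c=p_{\Theta_6}(c,4)$ from $c$. Both chains live in the \emph{same} half-$\Theta_6$ graph (the one on $\cone_2,\cone_4,\cone_6$), which is planar by \autoref{thm:theta6}. Geometrically, $p_d$ starts on the $od$-side of $\triangle cod$ and heads toward the $oc$-side, while $p_c$ does the reverse; since planar graph edges cannot cross, the two chains must share a vertex $e\in\triangle cod$. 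Concatenating $p(d,e)$ with $p(c,e)$ gives the desired path, and termination is automatic---no induction, no ``splicing in a $\cone_4$-edge'' at an unspecified moment, no case analysis on escaping through side $cd$. Your fallback (``route along the left boundary of the triangular-distance Delaunay triangulation'') gestures at half-$\Theta_6$, but still does not isolate this two-chains-meet-by-planarity mechanism, which is the missing idea.

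One more point: the survival of these edges into $\Theta\Theta_6$ is \emph{not} automatic for arbitrary $\cone_2$/$\cone_4$ $\Theta_6$-edges, as your first paragraph suggests. The paper's argument that $\T(v,u)$ is empty uses, crucially, that $\T(a,b)$ is empty (because $\arr{ab}\in\Theta_6$) and that the interior of the convex polygon $abcuvd$ is empty (because $S$ is in convex position); together with convexity these cover the relevant portion of $\T(v,u)$. You will need to reproduce that specific covering argument, not just invoke convex position generically.
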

\begin{proof}
Note that, since the points in $S$ are in convex position, the point $o$ exists and lies outside the convex quadrilateral $abcd$. Refer to~\autoref{fig:tripath}a. 
Consider the paths $\pp_c = \pp_{\Theta_6}(c, 4)$ and $\pp_d = \pp_{\Theta_6}(d, 2)$. Since $p_c$ and $p_d$ are in the same half-$\Theta_6$ graph, Theorem~\ref{thm:theta6} tells us that $\pp_c$ and $\pp_d$ do not cross. This implies that $\pp_c$ and $\pp_d$ meet in a point $e \in \triangle{cod}$. Let $\pp(c,e)$ be the piece of $\pp_c$ extending from $c$ to $e$, and $\pp(d,e)$ the piece of $\pp_d$ extending from $d$ to $e$. Note that $\pp(c,d) = \pp(c,e) \cup \pp(d,e)$ is a convex path that lies inside $\triangle{cod}$, which implies that $|\pp(c,d)| < |oc| + |od|$. 

To complete the proof, it remains to show that $p_{cd}$ is a path in $\Theta\Theta_6$. To do so, we consider an arbitrary edge $\arr{uv} \in \pp(c,d) \in \Theta_6$, and show that $\arr{uv} \in \Theta\Theta_6$. Assume first that $\arr{uv} \in \pp(c,e)$, meaning that $v \in \cone_4(u)$. Refer to~\autoref{fig:tripath}b. The convexity property of $S$ implies that no points may lie in $\T(v,u)$ and above $u$. 
Ignoring the piece of $\T(v,u)$ that extends above $u$, the rest of $\T(v,u)$ lies inside $\T(u,v) \cup abcuvd \cup T(a, b)$. This region, however, is empty of points in $S$: $\T(u,v)$ is empty of points in $S$ because $\arr{uv} \in \Theta_6$; $abcuvd$ is a convex polygon empty of points in $S$, by the convexity property of $S$; and $\T(a,b)$ is empty of points in $S$, 
because $\arr{ab} \in \Theta_6$. It follows that $\T(v,u)$ is empty of points in $S$ and therefore $\arr{uv} \in \Theta\Theta_6$.  

The arguments for the case when  $\arr{uv} \in \pp(d,e)$ are similar: in this case, $v \in \cone_2(u)$; no points in $S$ may lie in $\T(v,u)$ and left of $\cone_2(u)$; ignoring the piece of $\T(v,u)$ that extends left of $\cone_2(u)$, the rest of $\T(v,u)$ lies inside $\T(u,v) \cup abcvud \cup T(a, b)$, which is empty of points in $S$. It follows that $\T(v,u)$ is empty of points in $S$ and therefore $\arr{uv} \in \Theta\Theta_6$.  
\qed
\end{proof}

\begin{lemma}
For any edge $\arr{ab}$ in the $\Theta_6$-graph induced by a set of points $S$ in convex position, there is a path between $a$ and $b$ in $\Theta\Theta_6$ no longer than $4|ab|$. 
\label{lem:fullpath}
\end{lemma}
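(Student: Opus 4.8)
The plan is to reduce to \autoref{lem:tripath}. If $\arr{ab} \in \Theta\Theta_6$, then the single edge $ab$ is itself a path of length $|ab| \le 4|ab|$ and we are done, so assume $\arr{ab} \in \Theta_6 \setminus \Theta\Theta_6$. By the rotational symmetry of the six cones we may assume $b \in \cone_1(a)$, so that $a \in \cone_4(b)$ and $\arr{ab}$ is an incoming edge of $b$ in $\cone_4(b)$. Since $\arr{ab}$ did not survive the filtering step at $b$, that step retained a strictly shorter edge $\arr{cb} \in \Theta\Theta_6$ with $c \in \cone_4(b)$ and $\|cb\| < \|ab\|$. To invoke \autoref{lem:tripath} I also need a point $d \in \cone_2(a)$ with $\arr{ad} \in \Theta_6$; I would take $d$ to be the $\Theta_6$-neighbor of $a$ in $\cone_2(a)$, arguing from convex position that this cone is nonempty (and treating the symmetric configuration through $\cone_6(a)$ if it is not).

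The first real step is to upgrade $\arr{ad}$ from a $\Theta_6$-edge to a $\Theta\Theta_6$-edge, since \autoref{lem:tripath} only hands me a $\Theta\Theta_6$ path between $c$ and $d$, and I still must connect $a$ to $d$ (recursing on $\arr{ad}$ is hopeless, because filtered edges need not be shorter than $|ab|$). Because $d \in \cone_2(a)$ we have $a \in \cone_5(d)$, so $\arr{ad}$ is an incoming edge of $d$, and it survives filtering exactly when the canonical triangle $\T(d,a)$ contains no point of $S$ other than $a$. I would prove this emptiness just as in the cone-$2$ case of \autoref{lem:tripath} with $(u,v) = (a,d)$: $\T(a,d)$ is empty since $\arr{ad}\in\Theta_6$, the convex polygon $abcd$ is empty by the convex position of $S$, and $\T(a,b)$ is empty since $\arr{ab}\in\Theta_6$; the part of $\T(d,a)$ not cut off to the left of $\cone_2(a)$ lies in the union of these three empty regions, so $\T(d,a)$ is empty and $\arr{ad}\in\Theta\Theta_6$.

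With this in hand I would assemble the path. \autoref{lem:tripath} supplies a $\Theta\Theta_6$ path $\pp_{\Theta\Theta_6}(d,c)$ inside $\triangle cod$ of length at most $|oc|+|od|$; prepending $\arr{ad}$ and appending $\arr{cb}$ yields a $\Theta\Theta_6$ path from $a$ to $b$ of length at most $|ad| + |oc| + |od| + |cb|$, and it remains to bound this by $4|ab|$. I would use three ingredients, all consequences of the $30^\circ$ cone half-angle: (i) since $\|cb\| < \|ab\| \le |ab|$ and a point in a $60^\circ$ cone lies within a factor $1/\cos 30^\circ = 2/\sqrt{3}$ of its projective distance to the apex, $|cb| < \tfrac{2}{\sqrt{3}}|ab|$; (ii) the upper ray of $\cone_4(c)$ and the left ray of $\cone_2(d)$ meet at $o$ in an angle of exactly $60^\circ$, so by the law of sines in $\triangle cod$ one has $|oc| + |od| \le 2|cd|$; and (iii) the emptiness of the canonical triangles $\T(a,b),\T(a,d),\T(d,a)$ together with convex position confines $c$, $d$, and $o$ to a bounded neighborhood of the segment $ab$, which bounds $|ad|$ and $|cd|$ in terms of $|ab|$.

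The main obstacle is ingredient (iii): turning the qualitative "bounded neighborhood" into the quantitative estimate that makes the three contributions sum to exactly $4|ab|$. Bounding $|cb|$ and the $\triangle cod$ detour is routine, but controlling $|ad|$ — the excursion $a$ makes into $\cone_2(a)$ before the path turns back toward $b$ — requires carefully tracking how far $d$ can lie from $a$ given the emptiness constraints and the convex position of $a,b,c,d$, and this is where I expect the bulk of the angle-chasing and casework, and where the clean constant $4$ must finally be verified. A secondary obstacle is the existence of $d$, i.e.\ the nonemptiness of $\cone_2(a)$, which I would dispatch up front from convex position and, if needed, by the mirror argument using $\cone_6(a)$.
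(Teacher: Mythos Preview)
Your central step---upgrading the $\Theta_6$-edge $\arr{ad}$ to a $\Theta\Theta_6$-edge---does not go through, and the paper's own tightness construction in \autoref{lem:convexlb} is a direct counterexample. Take $S=\{a,b_1,b_2,b_3,b_4\}$ there: with $b=b_1$ you get $c=d=b_2$, and the $\Theta\Theta_6$-graph is the path $ab_4b_3b_2b_1$, so $\arr{ab_2}=\arr{ad}\notin\Theta\Theta_6$ (it is filtered out at $b_2$ in favor of $\arr{b_3b_2}$). Your argument for the emptiness of $\T(d,a)$ borrows the convexity step from \autoref{lem:tripath}, but that step---``no points of $S$ lie in $\T(v,u)$ and left of $\cone_2(u)$''---relies on $u$ sitting strictly along the path $\pp_{\Theta_6}(d,2)$, so that the hull boundary separates that region from $S$. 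When $u=a$ this separation evaporates: the part of $\T(d,a)$ lying in $\cone_3(a)$ is exactly where $b_3$ lives in the example. So one pass through \autoref{lem:tripath} is not enough; this is precisely why the paper's proof iterates, letting the failure of $\arr{ab_2}\in\Theta\Theta_6$ push the argument into $\cone_3(a)$ (and, if $\arr{ab_3}$ also fails, into $\cone_4(a)$, where the convexity argument finally does force $\arr{ab_4}\in\Theta\Theta_6$).

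Your length bound is also left unfinished (you flag ingredient~(iii) yourself), and in fact this per-edge accounting is the hard way to do it. The paper avoids it entirely: once the iterated construction produces a path, that path is convex and contained in a fixed region $\R=abi_2i_3i_4$ whose boundary length off the segment $ab$ is shown to be below $4|ab|$ by a short chain of equalities; the bound then follows from the elementary fact that a convex curve inside a convex region is no longer than the boundary. No control of $|ad|$ or $|cd|$ individually is needed.
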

\begin{proof}
Assume without loss of generality that $\arr{ab} \in \cone_1(a)$ and let $\alpha$ be the angle formed by $ab$ with the lower ray of $\cone_1(a)$. 
Let $i_1$ ($h_1$) be the intersection point between the upper ray of $\cone_1(a)$ and the horizontal (perpendicular) through $b$. Refer to~\autoref{fig:fullpath}a. Let $i_2$ ($h_2$), $i_3$ ($h_3$) and $i_4$ ($h_4$) be copies of $i_1$ ($h_1$) rotated counterclockwise by $\pi/3$, $2\pi/3$ and $2\pi/3+\alpha$, respectively. Note that $|a h_1| < |ab|$ and $|bi_1| = 2 |i_1h_1|$. 
We show that there is a convex path $\pp(a,b) \in \Theta\Theta_6$ between $a$ and $b$ that lies inside the convex region $\R = abi_2i_3i_4$ (shaded in~\autoref{fig:fullpath}a). The length of such a path is 
\begin{eqnarray*}
|\pp(a,b)| & < & |bi_2| + |i_2i_3| + |i_3i_4| + |i_4a| \\
     	& = & 2|i_1h_1| + |i_1i_2| +  |i_2i_3| + |i_3i_4| + |i_4a| \\
	& < & 2|i_1h_1| + 4 |i_1i_2| < 4 (|i_1h_1| + |i_1i_2|) = 4|a h_1| \\
	& < & 4 |ab|
\end{eqnarray*}
\begin{figure}[htbp]
\centering
\includegraphics[width=0.95\linewidth]{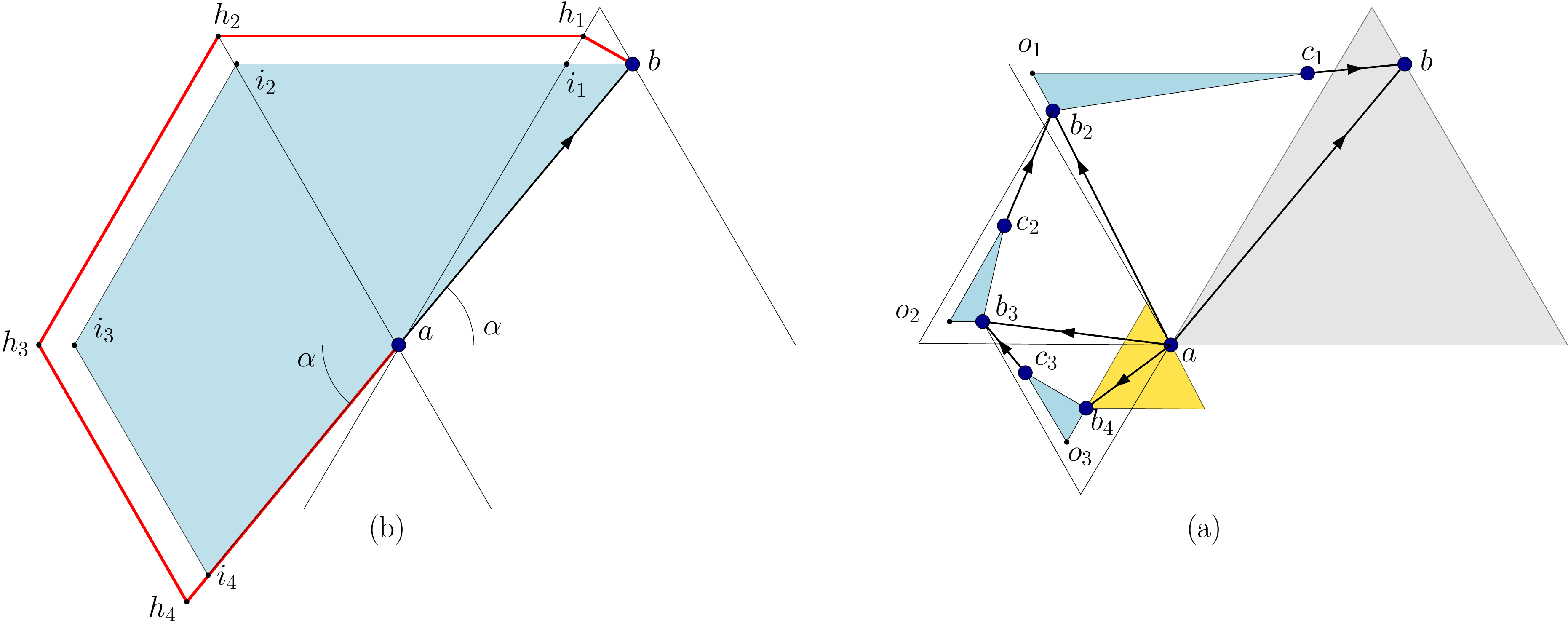} 
\caption{\autoref{lem:fullpath} (a) Any convex path from $a$ to $b$ that lies in the shaded area is no longer than $4|ab|$ (b) Path in $\Theta\Theta_6$ from $a$ to $b$: $\arr{ab} \in \Theta_6 \setminus \Theta\Theta_6$, $\arr{c_1b}, \arr{c_2b_2}, \arr{c_3b_3}, \arr{ab_4}  \in \Theta\Theta_6$.}  
\label{fig:fullpath} 
\end{figure}

\noindent
It remains to prove the existence of such a path $\pp(a,b) \in \Theta\Theta_6$. If $\arr{ab} \in \Theta\Theta_6$, then $\pp(a,b) = ab$ and the lemma trivially holds. Otherwise, there is  $\arr{c_1b} \in \Theta\Theta_6$, with $c_1 \in \cone(b,a)$. By definition $\|c_1b\| < \|ab\|$, which implies that $c_1$ lies in $\cone_2(a)$ or $\cone_6(a)$. Assume without loss of generality that $c_1 \in \cone_2(a)$; the case where $c_1 \in \cone_6(a)$ is symmetric. 
Because $\cone_2(a)$ is non-empty, $\Theta_6$ includes an edge $\arr{ab_2} \in \cone_2(a)$. Refer to~\autoref{fig:fullpath}b. If $b_2$ and $c_1$ coincide, let 
$\pp(b_2,c_1)$ be the empty path; otherwise, $\pp(b_2,c_1) \in \Theta\Theta_6$ is the path established by~\autoref{lem:tripath}, which lies in a triangular region inside $\T(a, c_1)$ (shaded in~\autoref{fig:fullpath}b). If $\arr{ab_2} \in \Theta\Theta_6$, then 
\[\pp(a,b) = ab_2 \oplus \pp(b_2,c_1) \oplus c_1b\]
is a convex path (by the convexity property of $S$) from $a$ to $b$ in $\Theta\Theta_6$ that lies inside $\R$, so the lemma holds. Here $\oplus$ denotes the concatenation operator. If $\arr{ab_2} \not \in \Theta\Theta_6$, then there is $\arr{c_2b_2} \in \Theta\Theta_6$, with $c_2 \in \cone(b_2,a)$. By definition $\|c_2b_2\| < \|ab_2\|$, which implies that $c_2 \in \cone_3(a)$.  
Because $\cone_3(a)$ is non-empty, $\Theta_6$ includes an edge $\arr{ab_3} \in \cone_3(a)$. If $b_3$ and $c_2$ coincide, let 
$\pp(b_3,c_2)$ be the empty path; otherwise, $\pp(b_3,c_2) \in \Theta\Theta_6$ is the path established by~\autoref{lem:tripath}, which lies inside $\T(a, c_2)$ (shaded in~\autoref{fig:fullpath}b). If $\arr{ab_3} \in \Theta\Theta_6$, then 
\[\pp(a,b) = ab_3 \oplus \pp(b_3,c_2) \oplus c_2b_2 \oplus \pp(b_2,c_1) \oplus c_1b\]
is a convex path from $a$ to $b$ in $\Theta\Theta_6$ that lies inside $\R$, so the lemma holds. If $\arr{ab_3} \not\in \Theta\Theta_6$, then there is $\arr{c_3b_3} \in \Theta\Theta_6$, with $c_3 \in \cone(b_3,a)$. By definition $\|c_3b_3\| < \|ab_3\|$, which implies that $c_3 \in \cone_4(a)$.  
Because $\cone_4(a)$ is non-empty, $\Theta_6$ includes an edge $\arr{ab_4} \in \cone_4(a)$. If $b_4$ and $c_3$ coincide, let 
$\pp(b_4,c_3)$ be the empty path; otherwise, $\pp(b_4,c_3) \in \Theta\Theta_6$ is the path established by~\autoref{lem:tripath}, which lies inside $\T(a, c_3)$. The convexity property of $S$ implies that the region of $\T(b_4, a)$ that extends right of the line supporting $ab$ is empty of points in $S$. Ignoring this region, the rest of $\T(b_4, a)$ lies in $\T(a, b_4) \cup \T(a, b_3)$, which is also empty of points in $S$. It follows that $\T(b_4, a)$ is empty of points in $S$, therefore $\arr{ab_4} \in \Theta\Theta_6$. These together imply that 
\[\pp(a,b) = ab_4 \oplus \pp(b_4,c_3) \oplus c_3b_3 \oplus \pp(b_3,c_2) \oplus c_2b_2 \oplus \pp(b_2,c_1) \oplus c_1b\]
 is a convex path from $a$ to $b$ in $\Theta\Theta_6$ that lies inside $\R$. This completes the proof.
\end{proof}
Lemmas~\ref{thm:theta6} and~\ref{lem:fullpath} together yield the main result of this section.
\begin{theorem}
The $\Theta\Theta_6$-graph induced by a set of points in convex position is an $8$-spanner. 
\label{thm:convexub}
\end{theorem}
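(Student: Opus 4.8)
The plan is to combine the two preceding results multiplicatively. \autoref{thm:theta6} states that the half-$\Theta_6$-graph is a $2$-spanner, and since the half-$\Theta_6$-graph is a subgraph of $\Theta_6$, adding the remaining edges can only shorten distances, so the full graph $\Theta_6$ is a $2$-spanner as well. Thus, for any two points $u, v \in S$, there is a path $u = w_0, w_1, \ldots, w_m = v$ in $\Theta_6$ whose total length $\sum_{i=1}^{m} |w_{i-1}w_i|$ is at most $2|uv|$.

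The next step is to lift this $\Theta_6$-path to a $\Theta\Theta_6$-path. Each consecutive pair $w_{i-1}, w_i$ is joined by a $\Theta_6$-edge (in one of its two orientations), so \autoref{lem:fullpath} applies and guarantees a path between $w_{i-1}$ and $w_i$ in $\Theta\Theta_6$ of length at most $4|w_{i-1}w_i|$. Concatenating these $m$ subpaths in order yields a walk from $u$ to $v$ in $\Theta\Theta_6$ of total length at most
\[
\sum_{i=1}^{m} 4|w_{i-1}w_i| = 4\sum_{i=1}^{m} |w_{i-1}w_i| \le 4 \cdot 2|uv| = 8|uv|.
\]
Hence $d_{\Theta\Theta_6}(u,v) \le 8|uv|$ for every pair $u, v \in S$, which is exactly the assertion that $\Theta\Theta_6$ is an $8$-spanner.

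Since both ingredients are already in hand, I do not expect a genuine obstacle here; the argument is the standard ``spanner-of-a-spanner'' composition, in which the factor $2$ from \autoref{thm:theta6} and the factor $4$ from \autoref{lem:fullpath} multiply to $8$. The only points that deserve a line of care are (i) recording explicitly that the half-$\Theta_6$-graph being a $2$-spanner immediately makes $\Theta_6$ a $2$-spanner, and (ii) noting that \autoref{lem:fullpath} is insensitive to the orientation of the $\Theta_6$-edge, so it can be invoked for each edge of the chosen $\Theta_6$-path regardless of the direction in which that edge happens to be oriented. A final cosmetic remark: the concatenation produces a walk rather than necessarily a simple path, but this does not affect the length bound, and a simple path of no greater length can be extracted if one is desired.
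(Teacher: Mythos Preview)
Your proposal is correct and matches the paper's own argument essentially verbatim: the paper simply states that the theorem follows by combining \autoref{thm:theta6} with \autoref{lem:fullpath}, i.e., the same ``spanner-of-a-spanner'' composition yielding $2\cdot 4 = 8$. Your added remarks about orientation and walks versus paths are sound refinements of what the paper leaves implicit.
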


The following lemma establishes a lower bound of $4$ on the spanning ratio of $\Theta\Theta_6$ for convex point sets. In addition, it shows that the bound $4$ of~\autoref{lem:fullpath} on the spanning ratio of $\Theta\Theta_6$-paths spanning $\Theta_6$-edges is tight. \begin{lemma}
The spanning ratio of the $\Theta\Theta_6$-graph induced by a set of points in convex position is at least $4$. 
\label{lem:convexlb}
\end{lemma}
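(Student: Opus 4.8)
The plan is to show the lower bound by exhibiting, for every small $\eta>0$, a set of points in convex general position containing two vertices $a,b$ with $d_{\Theta\Theta_6}(a,b)>(4-O(\eta))\,|ab|$; letting $\eta\to 0$ then gives spanning ratio at least $4$. Since the witnessing path is precisely a $\Theta\Theta_6$-path spanning a single $\Theta_6$-edge $\arr{ab}$, the same family simultaneously certifies that the constant $4$ in \autoref{lem:fullpath} is best possible.

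First I would reverse-engineer the tightness of \autoref{lem:fullpath}. Tracing the chain of inequalities ending in $|\pp(a,b)|<4|ab|$, equality is approached only when the angle $\alpha$ between $ab$ and the lower ray of $\cone_1(a)$ tends to $\pi/3$, so that $b$ sits essentially on the upper ray of $\cone_1(a)$ and the detour hugs the boundary polyline $a\,i_4\,i_3\,i_2\,b$. Putting $a$ at the origin with $|ab|=1$, in this limit the points $b,i_2,i_3,i_4$ occupy the directions $\tfrac{\pi}{3},\tfrac{2\pi}{3},\pi,\tfrac{4\pi}{3}$ on the unit circle about $a$: they are four consecutive vertices of a regular hexagon centered at $a$, and $a\to i_4\to i_3\to i_2\to b$ is one radius followed by three hexagon edges, of total length exactly $4=4|ab|$. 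This pins down the extremal configuration.

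Next I would realize it with an honest convex, general-position point set. Fix $\alpha=\pi/3-\eta$ and take the five points $a,b,b_2,b_3,b_4$ with $b_j$ at the (now slightly off-hexagonal) corners $i_j$; because $b$ and $i_4$ lie in antipodal directions from $a$, the exact placement makes $a,b,i_4$ collinear, so I would push $a$ off the segment $b\,i_4$ by a displacement $\delta\ll\eta$ (toward the open side of the arc $b,i_2,i_3,i_4$) to obtain a strictly convex pentagon, followed by a generic perturbation of size $o(\delta)$ to restore general position. Taking $\delta\ll\eta$ guarantees that $b$ stays strictly inside $\cone_1(a)$ and $b_4$ strictly inside $\cone_4(a)$, with margin $\eta$. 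The heart of the argument is then to verify the $\Theta_6/\Theta\Theta_6$ structure on these five points, running the recursion of \autoref{lem:fullpath} in reverse: (i) $\arr{ab}\in\Theta_6$ but is filtered out of $\Theta\Theta_6$ because $\arr{b_2b}\in\cone_4(b)$ is an incoming edge with $\|b_2b\|<\|ab\|$; (ii) likewise $\arr{ab_2}$ and $\arr{ab_3}$ are filtered in favor of $\arr{b_3b_2}$ and $\arr{b_4b_3}$, while $\arr{ab_4}$ survives because $\T(b_4,a)$ is empty; (iii) each of $\arr{ab_4},\arr{b_4b_3},\arr{b_3b_2},\arr{b_2b}$ is the unique and shortest incoming edge in its cone, hence lies in $\Theta\Theta_6$. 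With only five points these are finitely many projective-distance and cone-membership comparisons, each reducing to a strict inequality that holds once $\alpha$ is close enough to $\pi/3$. Crucially, step (ii) also deletes every pentagon diagonal ($ab_2,ab_3$, and symmetrically the others) from $\Theta\Theta_6$, which is what forbids a shortcut of length $\approx 2$; thus the only route joining $a$ to $b$ is the detour, giving $d_{\Theta\Theta_6}(a,b)=|ab_4|+|b_4b_3|+|b_3b_2|+|b_2b|\to 4|ab|$.

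The main obstacle I anticipate is precisely this simultaneous bookkeeping: I must choose the perturbation $(\eta,\delta)$ so that all the strict projective-distance inequalities of (i)--(iii), all the half-open cone-membership conditions, and the shortcut-exclusions hold at once, and so that the degeneracies of the exact hexagon (the collinearity $a,b,i_4$ and the direction $\tfrac{4\pi}{3}$ sitting on a cone boundary) break in the right direction. Once these checks go through, letting $\eta\to 0$ drives the four surviving edge lengths to $1$ and $|ab|$ to $1$, so the ratio $d_{\Theta\Theta_6}(a,b)/|ab|$ tends to $4$, establishing the lower bound and the tightness of \autoref{lem:fullpath}.
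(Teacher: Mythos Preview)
Your proposal is correct and follows essentially the same approach as the paper: both use five points consisting of a center $a$ and four points $b_1,b_2,b_3,b_4$ placed near consecutive vertices of a regular hexagon centered at $a$, perturbed into strict convex general position, so that $\Theta\Theta_6$ reduces to the boundary path $a\,b_4\,b_3\,b_2\,b_1$ of length approaching $4|ab_1|$. The only cosmetic differences are that the paper perturbs the $b_i$'s (rather than $a$) to break the collinearity of $a,b_1,b_4$, and it presents the configuration directly rather than reverse-engineering it from the tightness analysis of \autoref{lem:fullpath}; your additional remark that all pentagon diagonals must be checked absent from $\Theta\Theta_6$ is a point the paper leaves to the figure.
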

\begin{proof}
We construct a set of points $S$ that satisfies the claim of this lemma. Let $a$ be an arbitrary point in the plane 
and let $b_i$ be the point at unit distance from $a$ that lies on the counterclockwise ray of $\cone_i(a)$, for $i = 1, \ldots, 4$. 
Refer to~\autoref{fig:tt6convextight}a. Perturb the points infinitesimally so that $b_1$ lies strictly inside $\cone_1(a)$ and $b_i$ lies strictly inside $\cone_i(a) \cap \T(a, b_{i-1})$, for $i = 2, 3, 4$. We ignore this infinitesimal quantity from our calculations and assume that $|ab_i| = 1$, for $i = 1, \ldots, 4$ and $|b_ib_{i+1}| = 1$, for $i = 1, 2, 3$. 

\begin{figure}[htbp]
\centering
\includegraphics[width=0.75\linewidth]{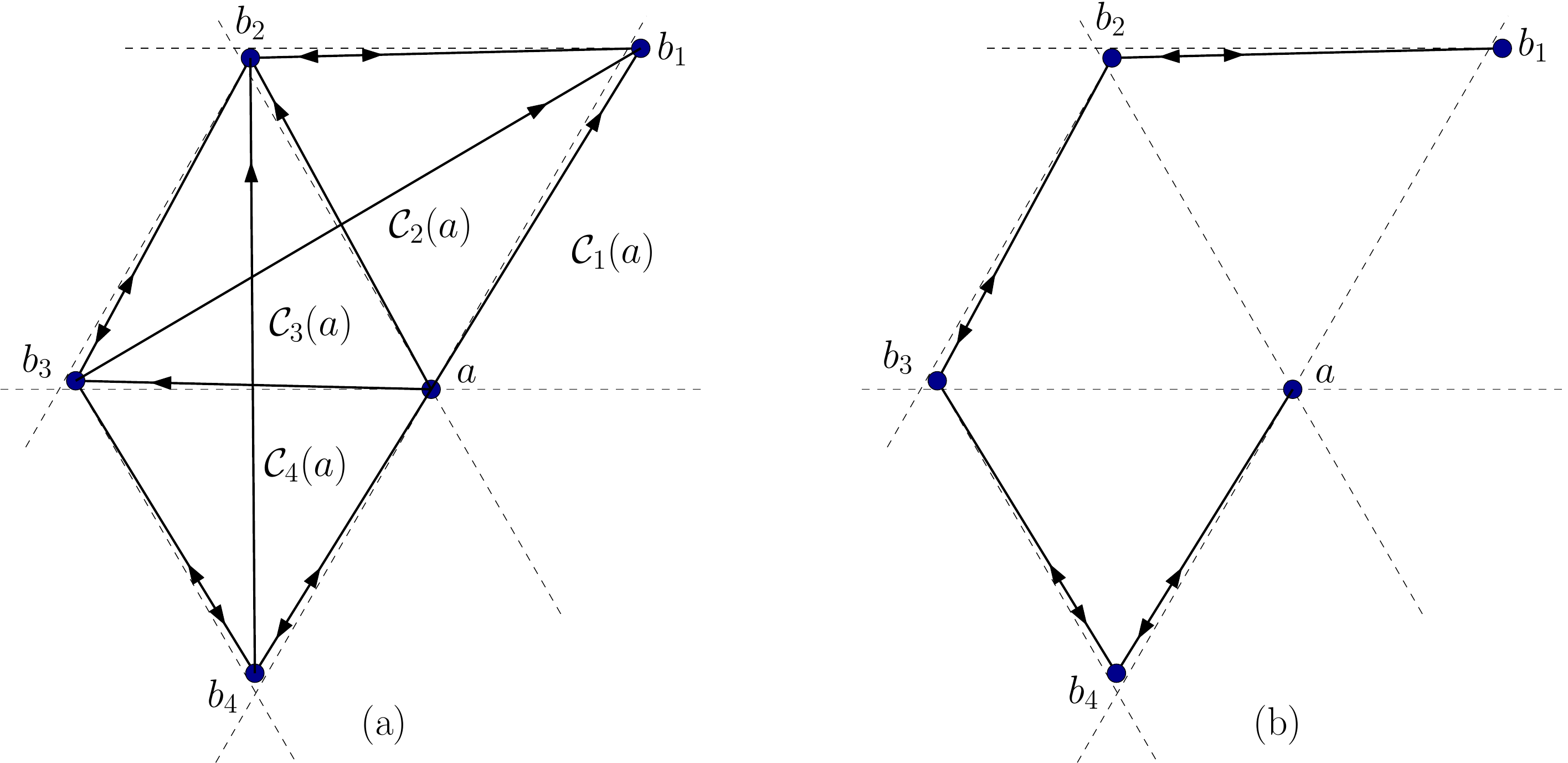} 
\caption{Set $S = \{a, b_1, b_2, b_3, b_4\}$ of points in convex position (a) $\Theta_6$-graph (b) $\Theta\Theta_6$-graph.}
\label{fig:tt6convextight} 
\end{figure}

Let $S = \{a, b_1, b_2, b_3, b_4\}$. The $\Theta_6$-graph and $\Theta\Theta_6$-graph induced by $S$ are depicted in~\autoref{fig:tt6convextight}a and~\autoref{fig:tt6convextight}b, respectively. Note that $\arr{ab_1} \in \Theta_6$, however 
$\arr{ab_1} \not\in \Theta\Theta_6$ and $\pp_{\Theta\Theta_6}(a,b_1) = ab_4 \oplus b_4b_3 \oplus b_3b_2 \oplus b_2b_1$ is a shortest path in $\Theta\Theta_6$ between $a$ and $b$ of length $4$. This proves the claim of this lemma. It also shows that the bound of~\autoref{lem:fullpath} is tight. 
\end{proof}

\section{$\Theta\Theta_6$ is not a Spanner for Points in Non-Convex Position}
In this section we show that there exist sets of points in non-convex position for which $\Theta\Theta_6$ has unbounded spanning ratio and therefore it is not a spanner. We show how to construct a set $S = \{a_i, b_i, c_i, d_i : i = 1, 2, \ldots, n\}$ of $4n$ points 
with this property. 

Let $a_1$ and $b_1$ be points in the plane such that $a_1b_1$ forms a $\pi/3$-angle with the horizontal through $a_1$. Let $r_a$ ($r_b$) be the ray with origin $a_1$ ($b_1$) pointing in the direction of the positive $x$-axis. Fix a small positive real value $0 < \alpha < 2$, and rotate $r_a$ ($r_b$) clockwise (counterclockwise) about $a_1$ ($b_1$) by angle $\alpha$. Let $a_2, a_3, \ldots a_{n}$ be points along $r_a$, and 
 $b_2, b_3, \ldots b_{n}$ points along $r_b$, such that $\angle{b_{i-1}a_ib_i} = \pi/3$ for each $i = 2, \ldots, n$, 
  and $\angle{a_ib_ia_{i+1}} = \pi/3$ for each $i = 1, 2, \ldots, n-1$. 
Refer to~\autoref{fig:tt6-init}a (which shows $\alpha$ enlarged for clarity). Note that at this point 
 $\cone_2(a_i)$ and $\cone_5(b_i)$ share the line segment $a_{i}b_{i}$, for each $i = 1, 2, \ldots, n$. Fix an arbitrary real value 
\begin{equation}
 \delta < \frac{|a_1a_2|\sin\alpha}{2}. 
 \label{eq:delta}
 \end{equation}
 Keep $a_1$ in place and shift the remaining points rightward alongside their supporting rays $r_a$ and $r_b$ such that 
 the horizontal distance between the right boundary ray of $\cone_2(a_i)$ and the left boundary ray of $\cone_5(b_i)$ is $\delta$, for each $i$. 
Refer to~\autoref{fig:tt6-init}b.
Finally, let $c_i$ ($d_i$) be a copy of $b_i$ ($a_i$) shifted upward (downward) by $2\delta$, 
for $i = 1, 2, \ldots, n$. Thus $b_ic_i$ and $a_id_i$ are vertical line segments of length $|b_ic_i| = |a_id_i| = 2\delta$. 
 
\begin{figure}[h]
\centering
\begin{tabular}{c@{\hspace{0.1\linewidth}}c}
\includegraphics[width=0.75\linewidth]{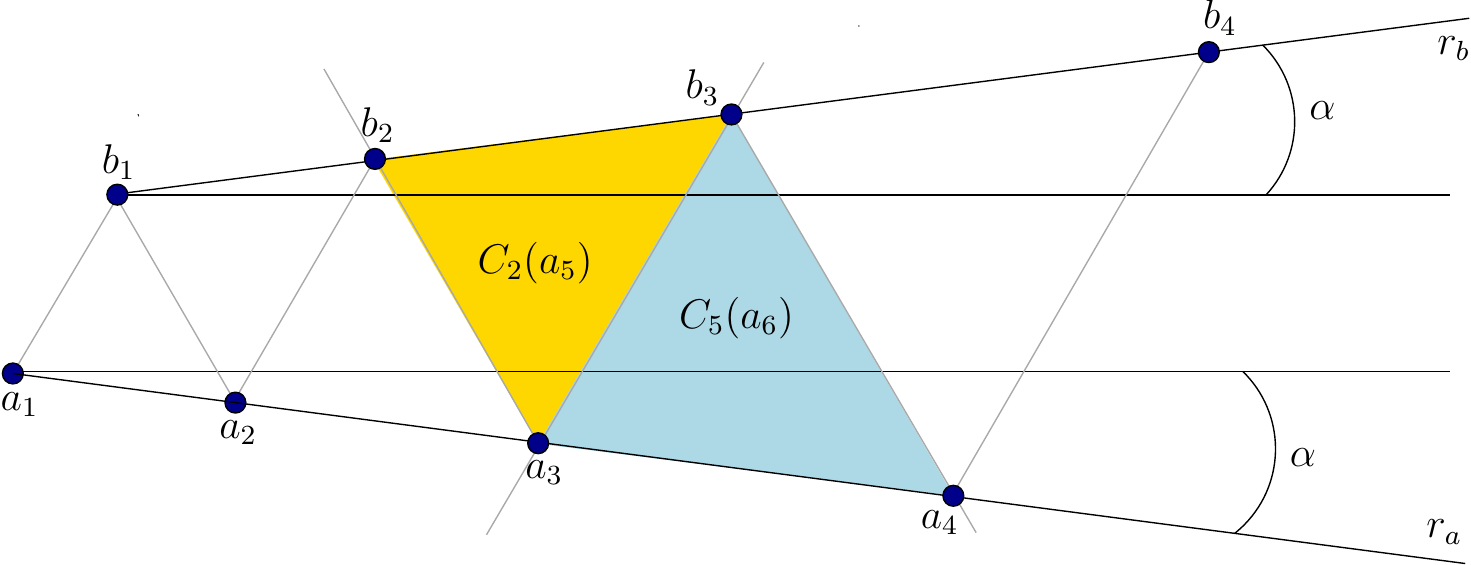} & \raisebox{5em}{(a)} \\
\includegraphics[width=0.75\linewidth]{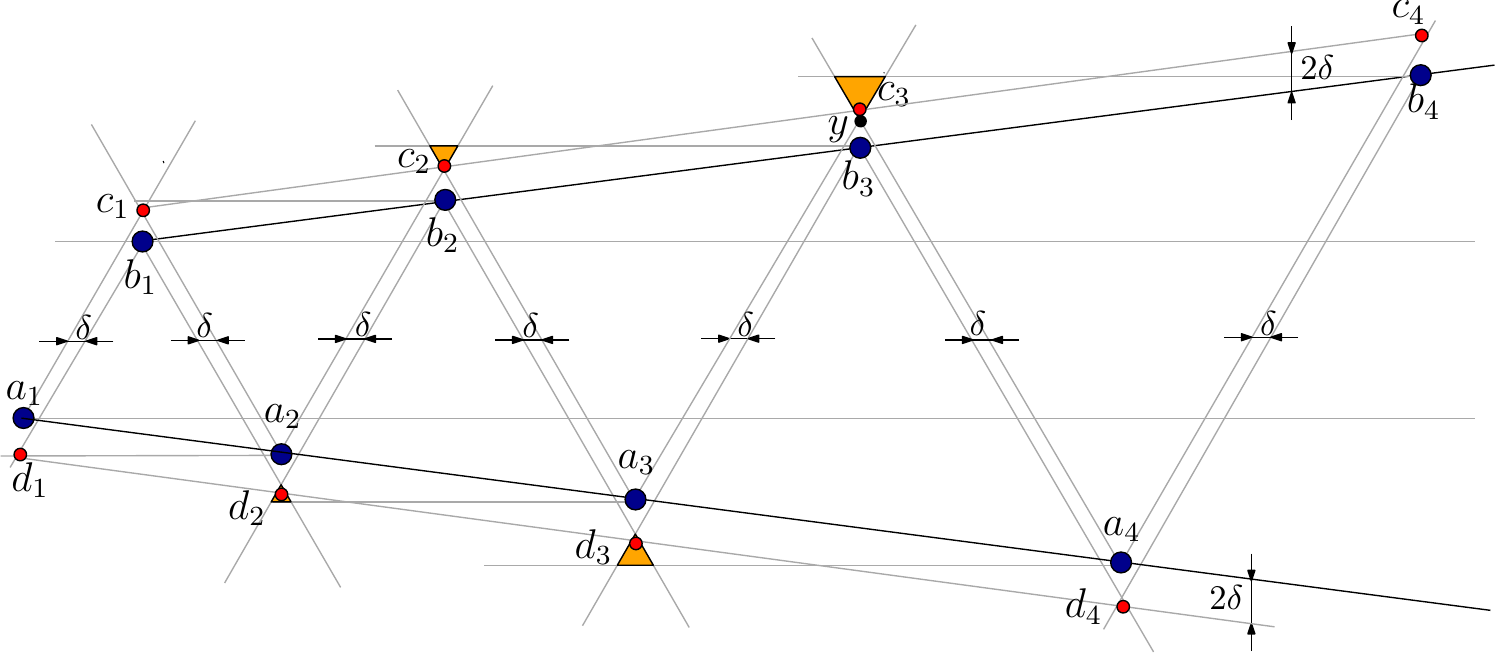} & \raisebox{6em}{(b)} 
\end{tabular}
\caption{(a) Initial point configuration (b) Shifted point positions.}
\label{fig:tt6-init}
\end{figure}

The following property is key to establishing an unbounded spanning ratio for $\Theta\Theta_6(S)$. 
\begin{property}
For each $i = 1, 2, \ldots, n-1$,  the point $c_i$ lies in a small triangular region at the intersection between 
$\cone_2(a_i)$, $\cone_2(a_{i+1})$ and $\cone_4(b_{i+1})$. 
\label{prop:c}
\end{property}
To establish this property, fix an arbitrary $i \in \{1, 2, \ldots, n-1\}$. 
Let $y$ be the intersection point between the right ray of $\cone_2(a_i)$ and the left ray of $\cone_2(a_{i+1}$). 
See~\autoref{fig:tt6-init}b, which depicts the instance $i = 3$. Note that $|b_iy|$ is equal to the height of an equilateral triangle of side length $2\delta$, which is $\delta\sqrt{3} < 2\delta = |b_ic_i|$. This means that $c_i$ lies vertically above $y$, therefore $c_i  \in \cone_2(a_i) \cap \cone_2(a_{i+1})$. 
To establish that $c_i \in \cone_4(b_{i+1})$, it suffices to show that $c_i$ lies below the horizontal line through $b_{i+1}$. The distance from $b_i$ 
to this line is $|b_{i}b_{i+1}|\sin\alpha > |a_1a_2|\sin\alpha > 2\delta$ (cf.~\autoref{eq:delta}). This along with the fact that $|b_ic_i| = 2\delta$ 
implies that $c_i$ lies below the horizontal line through $b_{i+1}$. This settles~\autoref{prop:c}. 

\medskip
Symmetric arguments establish the following property.
\begin{property}
For each $i = 2, \ldots, n-1$,  the point $d_i$ lies in a small triangular region at the intersection between 
$\cone_5(b_{i-1})$, $\cone_5(b_{i})$ and $\cone_3(a_{i+1})$. 
If $i = 1$, $d_i \in \cone_5(b_i) \cap \cone_3(a_{i+1})$. 
\label{prop:d}
\end{property}

\begin{figure}[htbp]
\centering
\begin{tabular}{c@{\hspace{0.1\linewidth}}c}
\includegraphics[width=0.75\linewidth]{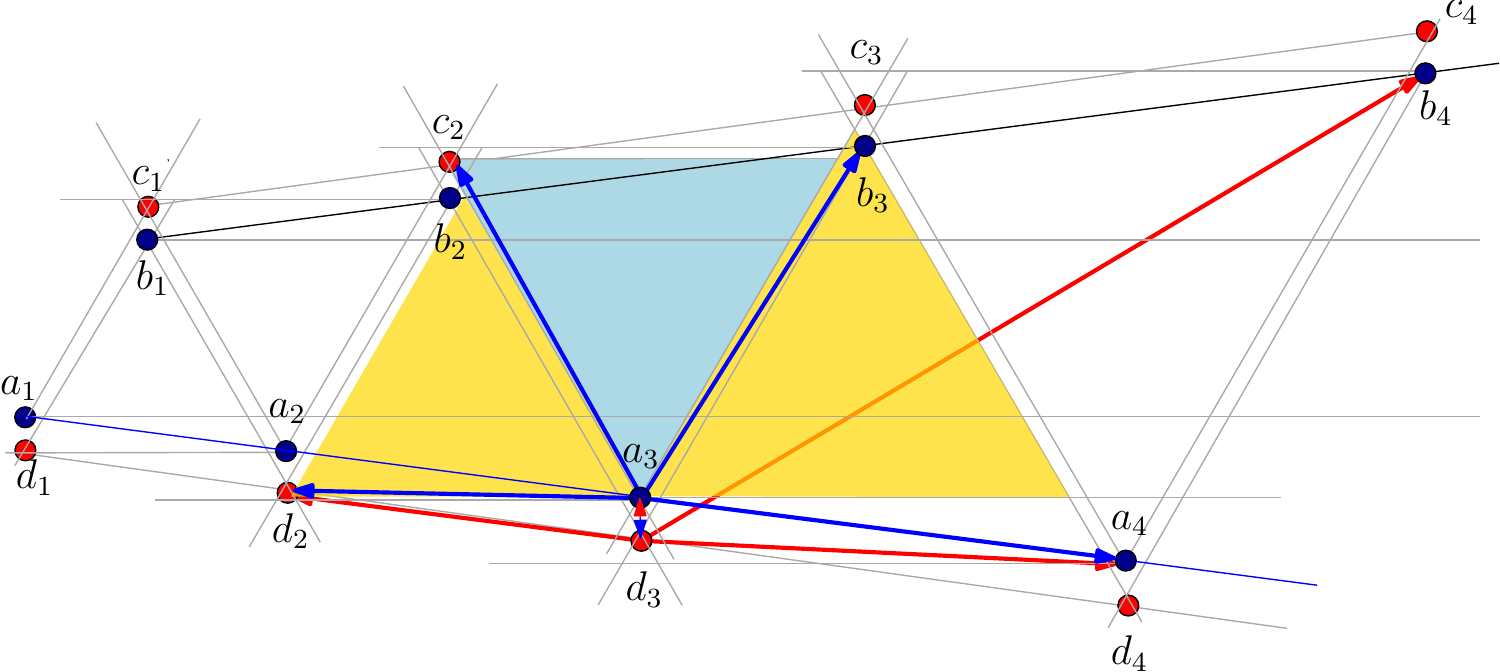} & \raisebox{6em}{(a)} \\
\includegraphics[width=0.75\linewidth]{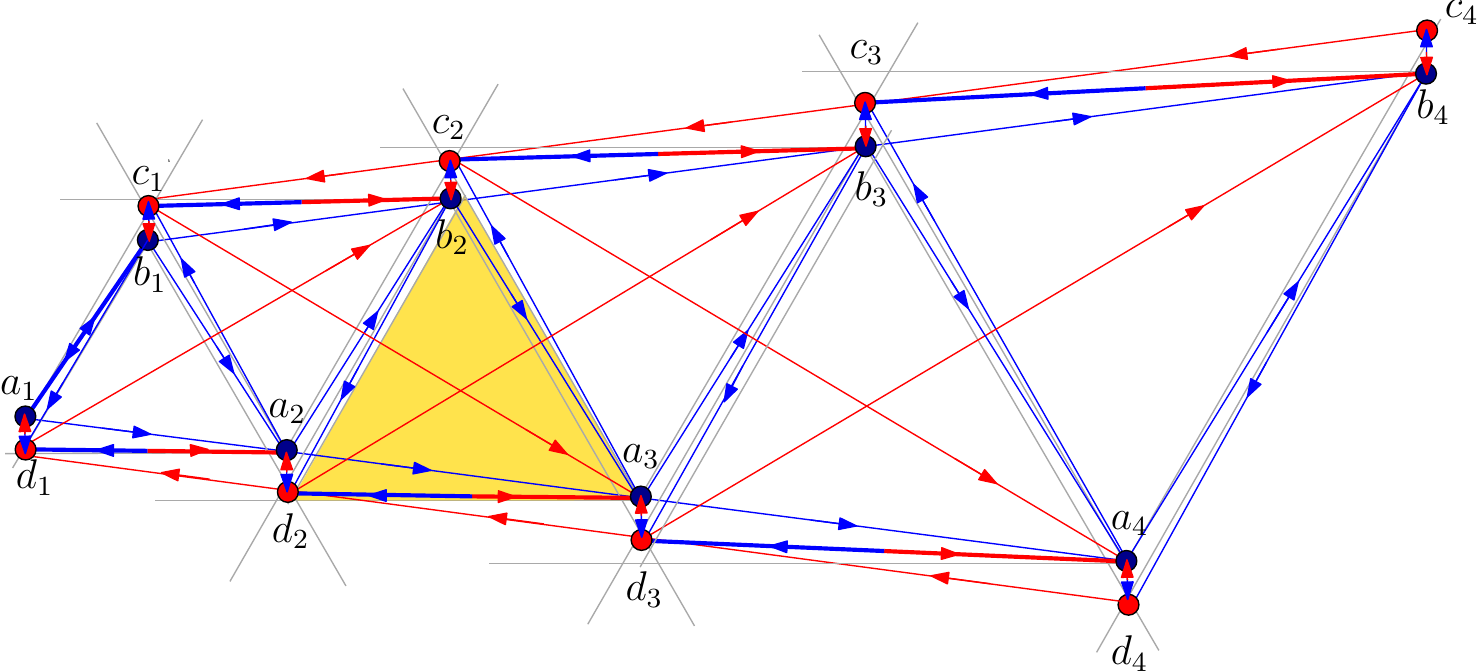} & \raisebox{6em}{(b)} \\
\includegraphics[width=0.75\linewidth]{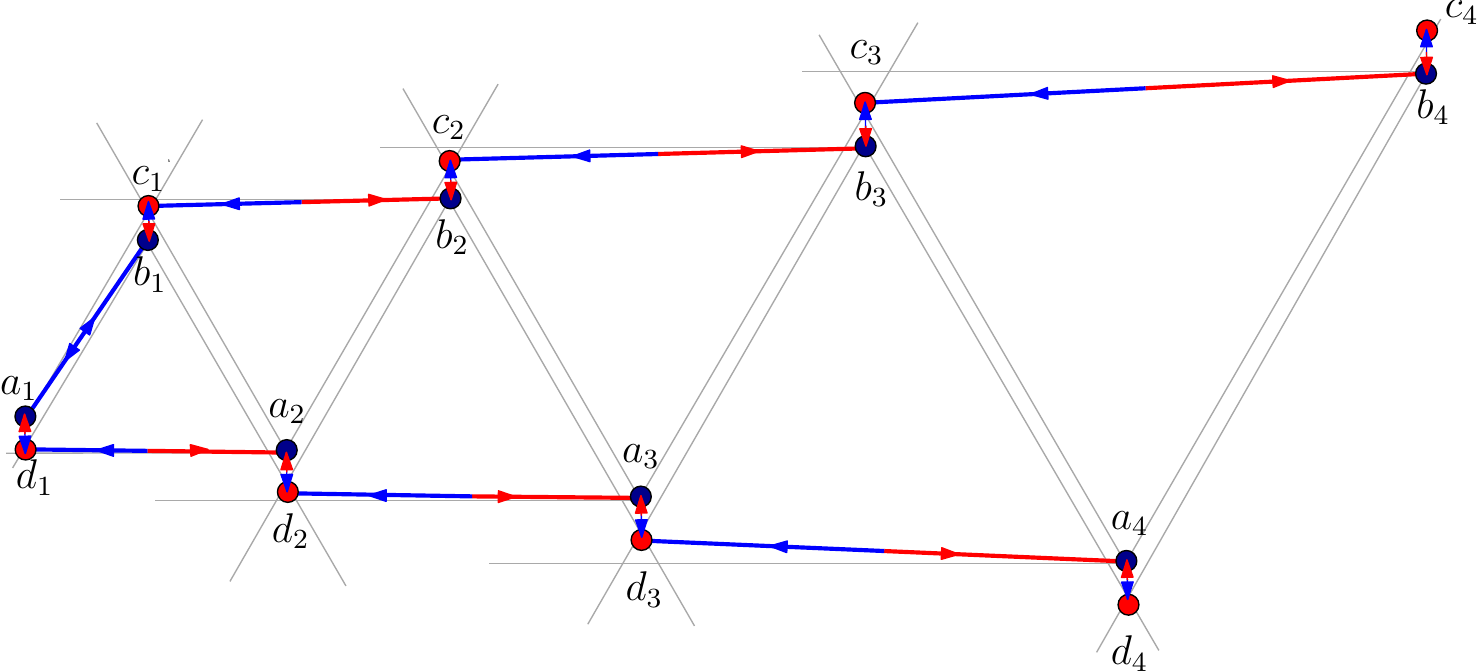} & \raisebox{6em}{(c)} \\
\end{tabular}
\caption{(a) Edges in $\Theta_6$ outgoing from $a_3$ and $d_3$ (b) $\Theta_6(S)$ (c) $\Theta\Theta_6(S)$.}
\label{fig:tt6-final} 
\end{figure}

We use Properties~\ref{prop:c} and~\ref{prop:d} in identifying the set of edges in $\Theta_6(S)$ and $\Theta\Theta_6(S)$. 
Fix an arbitrary $i \in \{1, \ldots, n\}$. The edges in $\Theta_6(S)$ outgoing from $a_i$ are: 
$\arr{a_ib_i} \in \cone_1(a_i)$; 
$\arr{a_ic_{i-1}} \in \cone_2(a_i)$, if $i > 1$; 
$\arr{a_id_{i-1}} \in \cone_3(a_i)$, if $i > 1$ 
(note that~\autoref{prop:d} implies that  $\|a_id_{i-1}\| < \|a_ib_{i-1}\|$);  
$\arr{a_id_i} \in \cone_5(a_i)$;  
and $\arr{a_ia_{i+1}} \in \cone_6(a_i)$, if $i < n$. Refer to~\autoref{fig:tt6-final}a, which 
depicts the instance $i = 3$. 
(Note that the cone $\cone_4(a_i)$ is empty of points in $S$.) 
The edges in $\Theta_6(S)$ outgoing from $d_i$ are: 
$\arr{d_ib_{i+1}} \in \cone_1(d_i)$, if $i < n$; 
$\arr{d_ia_i} \in \cone_2(d_i)$; 
$\arr{d_id_{i-1}} \in \cone_3(d_i)$, if $i > 1$;
and $\arr{d_ia_{i+1}} \in \cone_6(d_i)$, if $i < n$. 
(Note that the cones $\cone_4(d_i)$ and $\cone_5(d_i)$ are empty of points in $S$.)  
The edges in $\Theta_6(S)$ outgoing from $b_i$ are: 
$\arr{b_ib_{i+1}} \in \cone_1(b_i)$, if $i < n$; 
$\arr{b_ic_i} \in \cone_2(b_i)$; 
$\arr{b_ic_{i-1}} \in \cone_4(b_i)$, if $i > 1$ 
(note that~\autoref{prop:c} implies that  $\|b_ic_{i-1}\| < \|b_ia_i\|$);  
$\arr{b_id_{i}} \in \cone_5(b_i)$;
and $\arr{b_ia_{i+1}} \in \cone_6(b_i)$, if $i < n$. 
(Note that the cone $\cone_3(b_i)$ is empty of points in $S$.)
Finally, the edges in $\Theta_6(S)$ outgoing from $c_i$ are: 
$\arr{c_ib_{i+1}} \in \cone_1(c_i)$, if $i < n$; 
$\arr{c_ic_{i-1}} \in \cone_4(c_i)$, if $i > 1$; 
$\arr{c_ib_{i}} \in \cone_5(c_i)$;
and $\arr{c_ia_{i+2}} \in \cone_6(c_i)$, if $i < n-1$. 
(Note that the cones $\cone_2(c_i)$ and $\cone_3(c_i)$ are empty of points in $S$.)~\autoref{fig:tt6-final}b depicts the graph $\Theta_6(S)$, for $n = 4$. 

We now turn our attention to the set of incoming edges at each vertex in $\Theta_6(S)$. 
From among the four (three) edges directed into $a_i$ and lying in $\cone_3(a_i)$ for $i > 2$ 
($i = 2$), the edge $\arr{d_{i-1}a_i}$ has the shortest projective distance: 
$\|\arr{d_{i-1}a_i}\| < \|\arr{b_{i-1}a_i}\| < \|\arr{a_{i-1}a_i}\|$, and this latter quantity is 
in turn smaller than $\|\arr{c_{i-2}a_i}\|$, for $i > 2$. This implies 
that $\Theta\Theta_6(S)$ keeps $\arr{d_{i-1}a_i}$ and eliminates the other three (two) edges, for 
$i > 2$ ($i = 2$). Note that any cone with apex $a_i$ other than $\cone_3(a_i)$ contains at most one 
edge directed into $a_i$, which continues to exist in $\Theta\Theta_6$. 

For each $i$, the two edges directed into $d_i$ that lie in $\cone_2(d_i)$ satisfy $\|\arr{a_{i}d_i}\| < \|\arr{b_{i}d_i}\|$, 
therefore $\arr{b_{i}d_i}$ gets eliminated from $\Theta\Theta_6(S)$ in favor of $\arr{a_{i}d_i}$. Similarly, 
for $i < n$, $\arr{d_{i+1}d_i} \in \cone_6(d_i)$ gets eliminated from $\Theta\Theta_6(S)$ in favor of $\arr{a_{i+1}d_i} \in \cone_6(d_i)$. 
There are no edges in $\Theta_6(S)$ directed into $d_i$ that lie in any of the cones $\cone_1(d_i)$, 
$\cone_3(d_i)$, $\cone_4(d_i)$ and $\cone_5(d_i)$. 

For $i > 1$, the four edges directed into $b_i$ that lie in $\cone_4(b_i)$ satisfy 
$\|\arr{c_{i-1}b_i}\| < \|\arr{a_{i}b_i}\| < \|\arr{b_{i-1}b_i}\| < \|\arr{d_{i-1}b_i}\|$. 
This implies that $\Theta\Theta_6(S)$ keeps $\arr{c_{i-1}b_i}$ and eliminates the other three 
edges. The only other edge directed into $b_i$ is $\arr{c_ib_i} \in \cone_2(b_i)$. 
For $i = 1$, the two edges directed into $b_i$ are $\arr{a_ib_i} \in \cone_4(b_i)$ and 
$\arr{c_ib_i} \in \cone_2(b_i)$, which remain in place in $\Theta\Theta_6(S)$. 
Finally, $\Theta\Theta_6(S)$ eliminates $\arr{a_{i+1}c_i} \in \cone_5(c_i)$ in favor of $\arr{b_ic_i} \in \cone_5(c_i)$, and
$\arr{c_{i+1}c_i} \in \cone_1(c_i)$ in favor of $\arr{b_{i+1}c_i} \in \cone_1(c_i)$, for $i < n$. For $i = n$, the only 
edge directed into $c_i$ is $\arr{b_ic_i}$. 

The resulting $\Theta\Theta_6$-graph is the path depicted in~\autoref{fig:tt6-final}c. 
The edge set of  $\Theta\Theta_6(S)$ is $\{a_1b_1\} \cup \{a_id_i, b_ic_i : i = 1, 2, \ldots, n\} \cup \{d_ia_{i+1}, c_ib_{i+1}: i = 1, 2, \ldots, n-1\}$. 

For an arbitrarily small $\alpha$ value, we have $|a_nb_n| \approx |a_1b_1|$. 
A shortest path $\spp_{\Theta\Theta_6}(a_n,b_n)$ in this graph between $a_n$ and $b_n$ has length 
\begin{eqnarray*}
|\spp_{\Theta\Theta_6}(a_n,b_n)| & > & |a_1b_1| + \sum_{i=1}^{n-1}{\left(|a_id_i|+|d_ia_{i+1}|\right)} + 
\sum_{i=1}^{n-1}{\left(|b_ic_i|+|c_ib_{i+1}|\right)}  \\
& > & |a_1b_1| +  \sum_{i=1}^{n-1}{|a_ia_{i+1}|} +  \sum_{i=1}^{n-1}{|b_ib_{i+1}|} ~\mbox{~~~(by triangle inequality)}\\ 
& > & (2n-1)\cdot |a_1b_1| 
\end{eqnarray*}
This shows that the spanning ratio of $\Theta\Theta_6(S)$ is $\Omega(n)$, therefore we have the following result.

\begin{theorem}
The $\Theta\Theta_6$-graph is not a spanner. 
\end{theorem}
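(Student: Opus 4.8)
The plan is to refute the spanner property directly from its definition. Since a $t$-spanner must satisfy $d_G(u,v) \le t\,|uv|$ for one fixed constant $t$, it suffices to exhibit a family of point sets, one for each $n$, on which the ratio $d_G(u,v)/|uv|$ for some pair $u,v$ grows without bound with $n$. The family $S = S_n$ of $4n$ points $\{a_i,b_i,c_i,d_i\}$ constructed above, with the $a_i$ and $b_i$ placed along the two nearly-parallel rays $r_a$, $r_b$ and the $c_i$, $d_i$ obtained by small vertical offsets, is built precisely for this purpose; the two witness vertices will be the endpoints $a_n$ and $b_n$.

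The core of the argument is to determine the exact edge set of $\Theta\Theta_6(S)$ and show that it collapses to a single path through all $4n$ points. I would carry this out vertex by vertex, cone by cone, as in the enumeration above. First, \autoref{prop:c} and \autoref{prop:d} are used to record, for each vertex, which points fall in which of its six cones; the decisive incidences are that $c_i$ lies simultaneously in $\cone_2(a_i)$, $\cone_2(a_{i+1})$ and $\cone_4(b_{i+1})$, with the symmetric statement for $d_i$. These incidences fix the outgoing $\Theta_6$-edges at every vertex. Next I would apply the Theta-filtering step to the incoming edges, comparing projective distances: the chains $\|\arr{d_{i-1}a_i}\| < \|\arr{b_{i-1}a_i}\| < \|\arr{a_{i-1}a_i}\|$ in $\cone_3(a_i)$ and $\|\arr{c_{i-1}b_i}\| < \|\arr{a_ib_i}\| < \|\arr{b_{i-1}b_i}\|$ in $\cone_4(b_i)$ are exactly what cause the direct edges $\arr{a_{i-1}a_i}$ and $\arr{b_{i-1}b_i}$ to be discarded in favour of the detours through the $c_i$ and $d_i$. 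The result is that $\Theta\Theta_6(S)$ retains only $\{a_1b_1\} \cup \{a_id_i, b_ic_i\} \cup \{d_ia_{i+1}, c_ib_{i+1}\}$, the single path of \autoref{fig:tt6-final}c joining $a_n$ to $b_n$.

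The conclusion is then a short estimate. Because this path is the only route between $a_n$ and $b_n$, the shortest path must traverse every detour $a_i d_i a_{i+1}$ and $b_i c_i b_{i+1}$; the triangle inequality bounds each detour below by the corresponding direct step $|a_ia_{i+1}|$ or $|b_ib_{i+1}|$, yielding total length greater than $(2n-1)\,|a_1b_1|$. On the other hand, as $\alpha \to 0$ the rays become nearly horizontal and $|a_nb_n| \approx |a_1b_1|$, so the spanning ratio is $\Omega(n)$. Letting $n \to \infty$ shows that no fixed $t$ can bound it, and therefore $\Theta\Theta_6$ is not a spanner.

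I expect the main obstacle to be the edge-set determination rather than the final estimate. Everything depends on choosing the perturbations and the horizontal shift so that the cone incidences and the projective-distance inequalities hold simultaneously; the role of the bound $\delta < |a_1a_2|\sin\alpha / 2$ in \eqref{eq:delta} is precisely to push $c_i$ (respectively $d_i$) far enough to enter $\cone_4(b_{i+1})$ (respectively $\cone_3(a_{i+1})$) while keeping it inside both relevant $\cone_2$ (respectively $\cone_5$) wedges. The delicate point is to verify that this single inequality makes every per-vertex case go through and that no stray surviving edge creates a shortcut between the two rays.
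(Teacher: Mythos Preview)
Your proposal is correct and follows essentially the same approach as the paper: you use the constructed family $S_n$, invoke Properties~\ref{prop:c} and~\ref{prop:d} and the projective-distance chains to reduce $\Theta\Theta_6(S)$ to the single path $\{a_1b_1\}\cup\{a_id_i,b_ic_i\}\cup\{d_ia_{i+1},c_ib_{i+1}\}$, and then bound $|\spp_{\Theta\Theta_6}(a_n,b_n)|$ below by $(2n-1)|a_1b_1|$ via the triangle inequality while $|a_nb_n|\approx|a_1b_1|$. The paper's proof is exactly this argument, with the edge-set determination already carried out in the preceding paragraphs.
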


\section{Conclusions}
This paper establishes the first result showing a difference in the spanning properties of two related classes of sparse graphs, namely Yao-Yao and Theta-Theta. Previous results show $YY_k$ and $\Theta\Theta_k$ are not spanners for $k \le 5$, and are spanners for some values of $k > 6$. In this paper we show that, unlike $YY_6$, the graph $\Theta\Theta_6$ is a spanner for sets of points in convex position. We also show that, for sets of points in non-convex position, $\Theta\Theta_6$ is not a spanner. The spanning ratios of $YY_k$ and $\Theta\Theta_k$, for all even $k$ in the range $[8, 28]$ and for some even values of $k$ (those that are not multiples of $6$) in the range $[32, 82]$, remain unknown. 

\paragraph{Acknowledgement.} 
This work was initiated at the \emph{Third Workshop on Geometry and Graphs}, held at the Bellairs
Research Institute, March 8-13, 2015. We are grateful to the other workshop participants
for providing a stimulating research environment.

\section{Bibliography}
\bibliographystyle{plain}
\bibliography{spannerbib}

\end{document}